


\documentclass[conference]{IEEEtran}
\usepackage{cite}
\usepackage{graphicx}
\graphicspath{ {figures/} }
\newcommand{\remove}[1]{}
\usepackage{times}
\usepackage{amsthm}
\usepackage{amssymb}
\usepackage{amsmath}
\usepackage{comment}
\usepackage{dblfloatfix}

\usepackage{datetime}
\usepackage{url}
\usepackage{hyperref}

\usepackage{tabularx}
\newcolumntype{C}[1]{>{\centering\let\newline\\\arraybackslash\hspace{0pt}}m{#1}}

\usepackage{algpseudocode}
\usepackage{algorithm}
\usepackage{bm} 
\usepackage{xcolor} 
\algdef{SE}[DOWHILE]{Do}{doWhile}{\algorithmicdo}[1]{\algorithmicwhile\ #1}%

\newcommand{\pvs}{\vspace{-8pt}}
\newcommand{\param}[1]{{\bf$\langle$#1$\rangle$}} 
\newcommand{\lam}{\gets}
\newcommand{\mib}[1]{$\bm{#1}$}
\newcommand{\algsize}{\scriptsize} 
\renewcommand{\Comment}[1] {\hfill\textit{\textcolor{cerulean}{$\triangleright$~#1}}} 

\algdef{SE}[EVENT]{Event}{EndEvent}[1]{\textbf{upon}\ #1\ \algorithmicdo}{\algorithmicend\ \textbf{event}}%
\algtext*{EndEvent}

\algdef{SE}[EVERY]{Every}{EndEvery}[1]{\textbf{every}\ #1\ \algorithmicdo}{\algorithmicend\ \textbf{every}}%
\algtext*{EndEvery}

\makeatletter
\algnewcommand{\OnlyComment}[1] {\hskip\ALG@thistlm\textit{\textcolor{cerulean}{$\triangleright$~#1}}}
\algnewcommand{\LineIndentComment}[1] {\Statex \hskip\ALG@thistlm\hskip\algorithmicindent\textit{\textcolor{cerulean}{$\triangleright$~#1}}} 
\algnewcommand{\LineComment}[1] {\Statex \hskip\ALG@thistlm\textit{\textcolor{cerulean}{$\triangleright$~#1}}} 
\makeatother

\algnewcommand\algorithmicforeach{\textbf{for each}}
\algdef{S}[FOR]{ForEach}[1]{\algorithmicforeach\ #1\ \algorithmicdo}

\algnewcommand{\EndIIf}{\unskip\ \algorithmicend\ \algorithmicif}
\definecolor{cerulean}{rgb}{0.0, 0.48, 0.65}
\usepackage{array}
\usepackage{calc}

\usepackage{pifont}
\newcommand{\xmark}{\ding{55}}%

\usepackage{multirow}

\usepackage{cancel}
\usepackage{graphicx}
\usepackage{caption}
\DeclareCaptionType{copyrightbox}
\usepackage{subcaption}

\newtheorem{lemma}{Lemma}

\newtheorem{proposition}{Proposition}


\newcommand{\ts}{PaRiS}

\setlength{\floatsep}{5pt}
\setlength{\textfloatsep}{10pt}
\setlength{\dbltextfloatsep}{15pt}
\setlength{\belowcaptionskip}{-3pt}




\date{}

\author{\IEEEauthorblockN{Kristina Spirovska$^1$, Diego Didona$^1$, Willy Zwaenepoel$^{1,2}$\\
$^1$ EPFL, $^2$ University of Sydney}

}


\begin{document}


\title{\ts{}: Causally Consistent Transactions with Non-blocking Reads and Partial Replication}

\maketitle
\thispagestyle{plain}
\pagestyle{plain}

\begin{abstract}

Geo-replicated data platforms are at the backbone of several large-scale online services. Transactional Causal Consistency (TCC) is an attractive consistency level for building such platforms. TCC avoids many anomalies of eventual consistency, eschews the synchronization costs of strong consistency, and supports interactive read-write transactions.  Partial replication is another attractive design choice for building geo-replicated platforms, as it increases the storage capacity and reduces update propagation costs.

This paper presents \ts{}, the first TCC system that supports partial replication and implements non-blocking parallel read operations, whose latency is paramount for the performance of read-intensive applications. \ts{} relies on a novel protocol to track dependencies, called Universal Stable Time (UST). By means of a lightweight background gossip process, UST identifies a snapshot of the data that has been installed by every DC in the system. Hence, transactions can consistently read from such a snapshot on any server in any replication site without having to block. Moreover, \ts{} requires only one timestamp to track dependencies and define transactional snapshots, thereby achieving resource efficiency and scalability.

We evaluate \ts{} on a large-scale AWS deployment composed of up to 10 replication sites. We show that \ts{} scales well with the number of DCs and partitions, while being able to handle larger data-sets than existing solutions that assume full replication. We also demonstrate a performance gain of non-blocking reads vs. a blocking alternative (up to 1.47x higher throughput with 5.91x lower latency for read-dominated workloads and up to 1.46x higher throughput with 20.56x lower latency for write-heavy workloads).

\end{abstract}


\section{Introduction}
\label{sec:intro}
Modern large-scale data platforms rely on geo-replication and sharding to store and manipulate large volumes of data efficiently. Geo-replication allows keeping a copy of the data in a data center (DC) closer to the users, thus reducing access latencies. Sharding enables horizontal scalability, by slicing the dataset in disjoint partitions, each of which can be assigned to a different server.

In geo-replicated environments, partial replication is an effective technique to improve storage capacity and reduce replication costs. In partial replication, each  DC stores only a subset of the partitions. Hence, the system can scale to a higher number of partitions with respect to a full replication approach, and updates performed in one DC are propagated to fewer replicas. 

Causal Consistency (CC) has emerged recently as an attractive consistency model for geo-replicated data platforms~\cite{Lloyd:2011,Lloyd:2013,Zawirski:2015,Akkoorath:2016,Mehdi:2017,Du:2014,Du:2013,Almeida:2013,Spirovska:2017,Didona:2018,Roohitavaf:2017}. CC provides intuitive semantics and avoids many anomalies allowed by weaker models, such as eventual consistency~\cite{DeCandia:2007}. Moreover, CC eschews the high synchronization costs of stronger consistency levels, such as linearizability~\cite{Herlihy:1990}. Transactional CC (TCC)~\cite{Zawirski:2015,Akkoorath:2016} extends CC by providing transactions that observe a causally consistent view of the data, and can perform atomic multi-object writes.

\pvs
~\\\noindent{\bf \ts{}.} This paper presents \ts{}, the first system that implements TCC in a partially replicated data platform, and that supports non-blocking parallel read operations (and hence, non-blocking one-round read-only transactions). Parallel non-blocking reads are an important requirement to guarantee good performance~\cite{Lu:2016,Tomsic:2018,Corbett:2013}, especially for the important and wide class of read-intensive applications~\cite{Urdaneta:2009,Nishtala:2013,Noghabi:2016}.

Achieving non-blocking  parallel transactional reads with partial replication is challenging. This is mainly because different reads within the same transaction may be served in parallel by servers in different DCs. In existing approaches, a DC is not aware of the set of transactions performed by other DCs. Ultimately, this can lead to consistency violations because a server in a DC is unaware of which data is returned  by other servers in other DCs to the same transaction.

\ts{} addresses this issue by means of a new causal dependency tracking protocol, that we call Universal Stable Time (UST). In short, UST identifies a snapshot of the dataset that has been installed in {\em all} DCs. Hence, a transaction can read from such snapshot in any DC without blocking. 
In addition to the snapshot defined by UST, \ts{} equips clients with a private cache, in which clients store their own updates that are not reflected yet in the snapshot identified by the UST. This allows \ts{} to achieve TCC even if exposing to clients a snapshot of the data store that is slightly in the past. 
\ts{} implements UST efficiently as a periodic, lightweight intra- and inter-DC gossiping protocol. In addition, \ts{} uses only one timestamp to track dependencies and to define transactional snapshots, thus enabling scalability both in terms of number of DCs and of partitions per DC. 

The trade-off --which is provably unavoidable~\cite{Tomsic:2018}-- made by \ts{} is to expose to transactions a view of the data that is slightly in the past. We argue that a moderate increase in the data staleness is a reasonable price to pay for the performance benefits brought by \ts{}.  

Overall, \ts{} achieves a triad of low latency, high storage capacity and rich transactional semantics. This represents a significant improvement over existing systems that either do not support partial replication, or generic read-write transactions, or block read operations to preserve consistency.

We evaluate \ts{} on a large scale AWS deployment comprising of up to 10 DCs, and with heterogeneous workloads with different degrees of locality in  the data accesses. We compare \ts{} with a variant of \ts{} that supports partial replication by blocking read operations. We show that \ts{} scales well with the number of DCs and partitions, while being able to handle larger datasets than existing solutions that assume full replication.

\pvs
~\\\noindent{\bf Roadmap.} The remainder of the paper is organized as follows. Section~\ref{sec:model} presents the system model. Section~\ref{sec:design} describes the design of \ts{}. Section~\ref{sec:protocols} describes the protocols and correctness of \ts{}. Section~\ref{sec:eval} reports the results of the evaluation of \ts{}. Section~\ref{sec:relwork} discusses related work. Section~\ref{sec:conclusion} concludes the paper.

\section{Definitions and system Model}
\label{sec:model}

\subsection{Causal Consistency}
\label{sec:model:cc}
A system is causally consistent if its servers return values that are consistent with the order defined by the {\em causality} relationship.
Causality is defined as a happens-before relationship between two events~\cite{Ahamad:1995,Lamport:1978}. For two operations $a,\  b$, we say that $b$ causally depends on $a$, and write $a \leadsto b$, if and only if at least one of the following conditions holds: $i)$ $a$ and $b$ are operations in a single thread of execution, and $a$ happens before $b$; $ii)$ $a$ is a write operation, $b$ is a read operation, and $b$ reads the version written by $a$; $iii)$ there is some other operation $c$ such that $a \leadsto c$ and $c \leadsto b$. Intuitively, CC ensures that if a client has seen the effects of operation $b$ and $a\leadsto b$, then the client also sees the effects of operation $a$.

We use lower case letters, e.g., $x$, to refer to a key and the corresponding capital letter, e.g., $X$ to refer to a version of the key. We say that $X$ depends on $Y$ if the write of $X$ causally depends on the write of $Y$. 

\subsection{Transactional Causal Consistency} 
\label{sec:background:tcc}
\noindent{\bf Semantics.} TCC extends CC by means of interactive read-write transactions in which clients can read and write multiple items. TCC enforces two properties.

\pvs
~\\\noindent{\em 1. Read from a causal snapshot.} A causally consistent snapshot is a set of item versions such that all causal dependencies of those versions are also included in the snapshot. All transactions read from a causally consistent snapshot. For any two items, $x$ and $y$, if $X \leadsto Y$ and both $X$ and $Y$ belong to the same causally consistent snapshot, then there is no other $X'$, such that $X'$ is created after $X$ and $X\leadsto X' \leadsto Y$.

\pvs
~\\
Transactional reads from a causal snapshot prevent undesirable anomalies which can arise by simply issuing multiple consecutive single read operations ~\cite{Lloyd:2011}.

The majority of existing CC systems implement transactional reads by means of one-shot read-only transactions~\cite{Du:2014,Lloyd:2011,Lloyd:2013,Lu:2016}.

\pvs
~\\\noindent{\em 2. Atomic updates.} Either all the items written by a transaction are visible to other transactions, or none is. If a transaction writes $X$ and $Y$, then any snapshot visible to other transactions either includes both $X$ and $Y$, or none of them.

 \pvs
 ~\\\noindent{\bf Conflict resolution.} Two writes are conflicting if they are not related by causality  and  update the same key. Conflicting writes are resolved by means of a commutative and associative function, that decides the value corresponding to a key given its current value and the set of updates on the key~\cite{Lloyd:2011}. 

 For simplicity, \ts{} resolves write conflicts using the last-writer-wins rule~\cite{Thomas:1979} based on the timestamp of the updates. Possible ties are settled by looking at the id of the DC combined with the identifier of the transaction that created the update. \ts{} can be extended to support other conflict resolution mechanisms~\cite{Akkoorath:2016,Lloyd:2011,Lloyd:2013,Shapiro:2011}.

\subsection{System model}
\label{sec:model:model}
We assume a distributed key-value store whose dataset is split into $N$ partitions. Each key is deterministically assigned to one partition by a hash function. We assume that each server is assigned a single partition and we note $p_x$ the server responsible for key $x$. 
Each partition $p_i$ is replicated at $R$ different DCs, where $R$ is the replication factor of data. There are $M$ DCs in total, hence, there is only a fragment of the full dataset present in each DC. 

We assume a multi-master system, i.e., all replicas can update keys they are responsible for. Updates are replicated asynchronously to remote DCs. 

We assume a multi-version data store. An update operation creates a new version of a key. Each version stores the value corresponding to the key and some meta-data  to track causality. The system periodically garbage-collects old versions of keys. 
Partitions communicate through point-to-point lossless FIFO channels (e.g., a TCP socket).

At the beginning of a session, a client $c$ connects to a partition $p$ in one DC according to some load balancing scheme. This DC is referred to as the local DC. The partition $p$ serves all $c$'s operations. If $p$ does not store a key $k$ targeted by an operation, $p$ transparently forwards the operation to a replica of $k$. $c$ does not issue the next operation until it receives the reply to the current one.

\noindent{\bf Availability.} We use the term availability to indicate the ability of a system to never block a client operation in a presence of a network partition among DCs~\cite{Brewer:2000}.

\subsection{APIs}
\ts{}'s programming interface offers the following operations for interactive read-write transactions:

\pvs
~\\\noindent{$\bm {\bullet <T_{ID}, S> \gets START-TX():}$} starts an interactive transaction T and returns T's transaction identifier T{$_{ID}$} and the causal snapshot S visible to T. 

\pvs
~\\\noindent{$\bm {\bullet \langle vals\rangle \gets READ(T_{ID}, k_1, . . . , k_n):}$} reads in parallel the set of items corresponding to the input set of keys for a transaction identified by T{$_{ID}$}. 

\pvs
~\\\noindent{ $\bm {\bullet WRITE(T_{ID}, \langle k_1,v_1 \rangle,...,\langle k_n,v_n\rangle):}$} updates a set of keys, given as input, to the corresponding values for a transaction with T{$_{ID}$}.

\pvs
~\\\noindent{$\bm {\bullet COMMIT-TX(T_{ID}) :}$} finalizes a transaction with T{$_{ID}$} to atomically update items that have been modified by means of a WRITE operation in the scope of the transaction, if any. 

\pvs
~\\Upon a start of  a transaction $T$, clients can issue multiple read and write operations that can operate on multiple keys, before committing $T$.

\pvs
~\\Since, under the TCC programming model, conflicting updates are resolved rather than forbidden, in \ts{} transactions never abort due to conflicts. Although transactions can abort by means of system-related issues, e.g., not enough space on a server to perform an update, we do not consider aborts in this paper for simplicity reasons.

\section{Design of \ts{}}
\label{sec:design}

 The main goal of \ts{} is to implement TCC in a partially replicated sharded system, while providing non-blocking parallel read operations and achieving scalability. 
 We show the challenges to achieving these goals in Section~\ref{sec:design:challenges}. Next, in Section~\ref{sec:design:nonblocking} we present how \ts{} overcomes these challenges by a novel dependency tracking protocol and the use of a small client-side cache. Finally, in Section~\ref{sec:design:fault_tol} we discuss fault tolerance and availability in \ts{}. 

\subsection{Challenges of partial replication}
\label{sec:design:challenges}

Since TCC must simultaneously guarantee the preservation of causal consistency and the atomicity of multi-item writes, achieving non-blocking reads while maintaining TCC is challenging. In a fully replicated environment, the task of enforcing this behavior, is eased by two invariants: $i)$ all remote updates are received by all DCs, and hence every DC receives the dependencies of each update, and $ii)$ all updates of a transaction are performed within the same DC, and hence all the updates of the same transaction can be found in each DC. 
As example, assume that $X \leadsto Y$, and that $X$ and $Y$ are the latest versions of their keys, $x$ and $y$, respectively. Causal consistency dictates that if a client reads $x$ and $y$ in a transaction, then if $Y$ is returned, $X$ has to be returned as well. Furthermore, assume that $Z$, the last version of key $z$, has been written by the same transaction as $Y$. TCC further implies that either both $Y$ and $Z$ are visible to the transaction, or none of them is. 
Hence, some sort of communication among partitions in the {\em same} DC  is enough to ensure that $Y$ is visible in the DC only after $X$, and that $Y$ and $Z$ are atomically visible. 

Partial replication, instead, violates the two invariants described above. This leads to a new set of challenges of enforcing consistency and atomicity. 
First, tracking consistency is harder. In the previous example with keys $x$ and $y$, $X$ may be replicated from DC$_0$ to DC$_1$, and $Y$ from DC$_0$ to DC$_2$. Then, assume that a transaction from DC$_3$ reads $x$ in DC1 and $y$ in DC$_2$. The transaction has to ensure that $Y$ is read in DC$_2$ only if also $X$ is read, concurrently, in DC$_1$. 
Similarly, enforcing atomicity is harder. Assume that $Z$ is replicated from DC$_0$ to DC$_1$, and $Y$ from DC$_0$ to DC$_2$, and that a transaction from DC$_3$ reads $y$ and $z$. Then, the transaction in DC$_3$ has to ensure that either both $Y$ and $Z$ or none of them are read in an atomic fashion.

Addressing these two challenges is made more difficult by the fact that a read operation can target any replica of the target key. Hence, consistency and atomicity have to be preserved despite the fact that different transactions targeting the same keys can hit different replicas of those keys. The complexity of the problem is further exacerbated by the fact that different replicas of a version $X$ may be in different DCs that store different sub-sets of the dependencies of $X$.

One possible solution to these challenges could be allowing more than one round of client-server communication to perform a single parallel read operation. Servers can return possibly inconsistent versions in the first round(s), and the client can detect and fix these violations by issuing additional read requests~\cite{Lloyd:2011,Lloyd:2013,Mehdi:2017}.

Another possible solution could be blocking a read on a partition until the partition knows that all other involved partitions are serving the read operations from the same causal snapshot of the data store~\cite{Almeida:2013,Du:2014,Akkoorath:2016}. Clearly, these solutions increase the latency experienced by the transactions, and reduce the achievable throughput, because they introduce waiting times or require extra communication.

\subsection{Non-blocking reads in partial replication by \ts{}}
\label{sec:design:nonblocking}
\ts{} addresses these challenges by a combination of a novel dependency tracking protocol, called UST, and a client-side cache. UST identifies snapshots of the data store that can be read by  transactions without blocking. These snapshots are such that they have been already installed by every DC, so they are slightly in the past. The client-side cache stores the versions written by the client that are not yet reflected in the snapshot determined by UST. This allows clients to observe monotonically increasing snapshots even if UST identifies slightly stale snapshots. We now explain how \ts{} leverages UST and the client-side cache in its transactional protocol. 

\noindent{\bf Transactions in \ts{}.} \ts{} identifies key versions and snapshots by means of a scalar timestamp. Upon starting, a transaction is assigned a snapshot timestamp $st$ that, together with the content of the client-side cache, determines the snapshot visible to the transaction. Upon completing, each transaction that writes at least one key is assigned a commit timestamp that reflects causality, determined by means of a two-phase commit (2PC) protocol. 

\noindent{\bf Non-blocking reads in \ts{}.} The key idea in \ts{} is to identify a snapshot that has been installed by each DC. We define such snapshot {\em stable}.
A stable snapshot with timestamp $ts$ contains versions with a timestamp $\leq ts$, and indicates that {\em every} transaction $T$ with a commit timestamp $\leq ts$ has been applied in {\em every} data center that stores a replica of the key written by $T$.   

 Hence, a transaction can read from a stable snapshot without blocking or running multiple client-server rounds, regardless of the DC in which the individual reads of the transaction are performed.

 A coordinator partition is responsible to assign a stable snapshot to a transaction that starts.  Any node can act as the coordinator of any transaction. The coordinator enforces that the snapshots assigned to transactions issued by the same client advance monotonically. To this end, the client piggybacks its  last observed snapshot timestamp on the transaction start message.  

\noindent{\bf UST.} UST is the new protocol implemented by \ts{} to identify, in a scalable fashion, stable snapshots.  Each partition maintains a version vector that indicates the timestamps of the latest applied transactions, both the ones executed by the partition itself and the ones received from remote replicas. 
Periodically, partitions within the same DC and across DCs, by means of a gossiping protocol,  exchange the minimum of the timestamps in their version vectors. The aggregate minimum of the exchanged values identifies a timestamp such that all transactions with lower timestamps have been applied by the corresponding partitions in every DC. Namely, such aggregate minimum timestamp identifies the stable snapshot that transactions are assigned upon starting.

UST identifies stable causally consistent snapshots with a single timestamp, regardless of the scale of the system. This enables high scalability and efficiency, by reducing partition-to-partition and client-to-partition communication overhead.

\noindent{\bf Cache.} UST alone cannot enforce causality. In fact, the commit timestamp assigned to a transaction $T$ issued by client $c$ is higher than the stable snapshot assigned to $T$. On the one hand, this allows commit timestamps to reflect causality. On the other hand, it means that the commit timestamp of $T$ may be higher than the snapshot assigned by the next transaction issued by $c$. In that case, such snapshot would not include the modifications performed by $c$ in $T$, which may lead to violation of the read-your-own-write property required by causal consistency.

\ts{} overcomes this issue by storing on the client the versions written by the client. Upon receiving a snapshot timestamp $st$, a client $c$ removes from the cache all versions with timestamp $\leq st$. These versions are, in fact, included in the snapshots visible to any future transaction issued by $c$. Upon reading key $x$, $c$ first checks its cache. If a version exists in the cache, that version has to be read by $c$ to enforce the read-your-own-write property. Else, $c$ issues a read request to a replica.  In both cases, the read completes without blocking.

\noindent{\bf Generating timestamps.} As in recent proposals~\cite{Spirovska:2018,Mehdi:2017,Roohitavaf:2017,Gunawardhana:2017}, \ts{} uses Hybrid Logical Physical Clocks (HLC)~\cite{Kulkarni:2014} to generate timestamps. An HLC is a logical clock whose value on a partition is the maximum between the local physical clock and the highest timestamp seen by the partition plus one. Like logical clocks, HLCs can be moved forward to match the timestamp of an incoming event, without  blocking to wait that the local physical clock catches up with the timestamp of the event. Like physical clocks, HLCs advance in the absence of events and at approximately the same pace. Hence, HLCs improve the freshness of the snapshot determined by UST over a solution that uses logical clocks, which can advance at very different rates on different partitions.

\subsection{Fault tolerance}
\label{sec:design:fault_tol}

\noindent{\bf Failures (within a DC).} \ts{} can tolerate failures of a server by integrating existing solutions for 2PC-based systems, e.g., based on Paxos~\cite{Lamport:1998}. Reads are non-blocking also with such mechanisms enabled, because they access a snapshot corresponding to transactions that have been {\em already} committed. 

As in previous systems based on dependency aggregation protocols, the failure of a server blocks the progress of UST, but only as long as a backup has not taken over. 

 Client failures are transparent to the system. The clients only keep local meta-data, and cache data that have already been committed to the data-store. The contexts corresponding to transactions of failed clients are cleaned in the background after a timeout. 

\noindent{\bf Availability (among DCs).} \ts{} achieves availability in a DC as long as one replica per partition is reachable by a DC. In fact, remote operations can be performed by any DC, because the snapshot visible to a transaction is the same, regardless of the partition contacted to serve an operation. In addition, local operations never block. 

If all replicas of one partition cannot be reached by a DC, then \ts{} cannot complete remote operations that target that partition, thus leading to unavailability.

If a DC partitions from the rest of the system, then the UST freezes at all DCs, because it is computed as a system-wide minimum. As a result, transactions see increasingly stale snapshots of the data, and the client cache cannot be pruned.

\section{Protocols of \ts{}}
\label{sec:protocols}
We now describe the meta-data stored and the protocols implemented by the clients and servers in \ts{}.

\subsection{Meta-data}
\label{sec:protocols:meta}

\noindent{\bf Items.}
An item $d$ is represented as the tuple $ \langle k, v, ut, id_T, sr \rangle $. $k$ and $v$ are the key and value of $d$, respectively. $ut$ is the timestamp of $d$ which is assigned upon creation of $d$ and determines the snapshot to which $d$ belongs. $id_T$ is the id of the transaction that created the item version. $sr$ is the id of the DC where the item is created.

\pvs 
~\\\noindent{\bf Clients.}
In a client session, a client $c$ maintains the highest stable snapshot timestamp known by $c$, noted $ust_c$, and the commit time of its last update transaction, noted $hwt_c$. The client also maintains a private cache $WC_c$, which stores items written by $c$ that are not included in the stable snapshot yet. Finally, the client maintains the meta-data and data of the transaction that is currently running: $id_c$, which is the unique identifier of the transaction, and $WS_c$ and $RS_c$, which correspond to the transaction's write set and read set, respectively.
 
\pvs 
~\\\noindent{\bf Server.} A server $p_n^m$ is identified by the partition id ($n$), and the DC id ($m$), which is the local DC of the server. Additionally, $p_n^m$ also stores the replica id ($r$), where $r \leq R$, the replication factor of partition $p_n^m$.  

Each server has access to a monotonically increasing physical clock, $Clock_n^m$. The local clock value on $p_n^m$ is represented by the hybrid logical clock $HLC_n^m$. $p_n^m$ also maintains two vector clocks $VV_n^m$ and $GSV_n^m$, that represent vectors of HLCs. $VV_n^m$, has $R$ entries, one for each replica of partition $n$. $VV_n^m[i], i \neq r$, indicates the timestamp of the latest update received by $p_n^m$ that comes from the $i$-th replica of partition $n$. $VV_n^m[r]$ is the version clock of the server and represents the local snapshot installed by $p_n^m$. 
$GSV_n^m$, or Global Stabilization Vector, has $M$ entries. $GSV_n^m[i] = t$ means that $p_n^m$ is aware that all the nodes in the $m-$th data center have installed all events generated in the $i-$th data center with timestamp up to $t$.
The server also stores the UST of $p_n^m$, noted $ust_n^m$. $ust_n^m = t$ indicates that $p_n^m$ is aware that every partition in every DC has installed a snapshot with timestamp at least t.

Finally, as a standard practice for systems that perform a 2PC protocol, $p_n^m$ keeps two queues with prepared and committed transactions. The former, noted $Prepared_n^m$, stores transactions for which $p_n^m$ has proposed a commit timestamp and for which $p_n^m$ is waiting the commit message. The latter, noted $Committed_n^m$ stores transactions that have been assigned a commit timestamp and whose modifications are going to be applied to $p_n^m$.

\begin{algorithm}[t]
\algsize
\caption{Client $c$ (open session towards $p_n^m$).}
\label{alg:client}
\begin{algorithmic}[1]

\Function{Start}{}
\State send \param{StartTxReq \mib{ust_c}} to $p_n^m$
\State receive \param{StartTxResp \mib{id, ust}} from $p_n^m$
\State $id_c \lam id;\ ust_c \lam ust;$
\State $RS_c \lam \emptyset; WS_c \lam \emptyset$
\State Remove from $WC_c$ all items with commit timestamp up to $ust_c$

\EndFunction

\Statex

\Function{Read}{$\chi$}
\State $D \lam \emptyset;\ \chi' \lam \emptyset$
\ForEach{$k \in \chi$}
\State $d \lam$ check $WS_c,\ RS_c,\ WC_c$ (in this order)
\State {\bf if} $(d \neq NULL)\ {\bf then}\ D\gets d$ 
\EndFor
\State $ \chi' \lam \chi \setminus D$
\State send \param{ReadReq \mib{id_c, \chi'}} to $p_n^m$
\State receive \param{ReadResp \mib{D'}} from $p_n^m$
   \State $D \lam D \cup D'$
\State $RS_c \lam RS_c \cup D$
\State \Return $D$
\EndFunction

\Statex

\Function{Write}{$\chi$}
\ForEach{$\langle k,v \rangle \in \chi$}\Comment{Update $WS_c$ or write new entry}
\State {\bf if} $(\exists d \in WS: d == k) ${\bf then} $d.v \lam v$  {\bf else} $WS_c\lam WS_c \cup \langle k,v \rangle $
\EndFor
\EndFunction

\Statex

\Function{Commit}{}\Comment{Only invoked if $WS \neq \emptyset$}

\State send \param{CommitReq \mib{id_c, hwt_c, WS_c}} to  $p_n^m$
\State receive \param{CommitResp \mib{ct}} from $p_n^m$
\State $hwt_c \lam ct$\Comment{Update client's highest write time}

\State Tag $WS_c$ entries with $hwt_c$ 
\State Move $WS_c$ entries to $WC_c$ \Comment{Overwrite (older) duplicate entries}

\EndFunction

\end{algorithmic}
\end{algorithm}


\begin{algorithm}[t]
\algsize
\caption{Server $p_n^m$  - transaction coordinator.}
\label{alg:srv:coordinator}
\begin{algorithmic}[1]
\Event{receive \param{StartTxReq \mib{ust_c}} from $c$}
\State $ust_n^m\lam max\{ust_n^m, ust_c\}$\Comment{Update universal stable time}
\State $id_T \lam generateUniqueId()$
\State $TX[id_T] \lam ust_n^m $\Comment{Save TX context}
\State send \param{StartTxResp \mib{id_T, TX[id_T]}}\Comment{Assign transaction snapshot}
\EndEvent

\Statex
\Event{receive \param{ReadReq \mib{id_T, \chi}} from $c$}
		\State $ ust \lam TX[id_T]$
		\State $D \lam \emptyset$
		\State $\chi_i \lam \{k \in \chi : partition(k) == i\}$\Comment{Partitions with $\geq$ 1 key to read} 
		\For {($i: \chi_i \neq \emptyset$)} \Comment{Done in parallel}
        \State $j = getTargetDCForPartition(i)$ \Comment{Returns an id of a DC that replicates partition $i$}
		
		\State  send \param{ReadSliceReq \mib{\chi_i, ust}} to $p_i^j$
		    \State  receive \param{ReadSliceResp \mib{D_i}} from $p_i^j$
		
		\State $D \lam D \cup D_i$
		\EndFor		
	\State send \param{ReadResp \mib{D}} to $c$
\EndEvent

\Statex

\Event{receive \param{CommitReq\ \mib{ id_T, hwt, WS}} from $c$}
\State $\langle ust \rangle \lam TX[id_T]$
\State $ht \lam max\{ust, hwt\}$\Comment{Max timestamp seen by the client}
\State $D_i \lam \{\langle k, v \rangle \in  WS : partition(k) == i\}$ 
\For {($i: D_i \neq \emptyset$)}\Comment{Done in parallel}
    \State $j = getTargetDCForPartition(i)$ \Comment{Returns an id of a DC that replicates partition $i$}
    \State  send \param{PrepareReq \mib{id_T, ust, ht, D_i}} to $p_i^j$
    \State  receive \param{PrepareResp \mib{id_T, pt_i}} from $p_i^j$
\EndFor
\State ct $\lam max_{i : D_i \neq \emptyset}\{pt_i\}$\Comment{Max proposed timestamp}
\State {\bf for} {($i: D_i \neq \emptyset$)} {\bf do} send \param{CommitReq \mib{id_T, ct}} to $p_n^m$ {\bf end for}
\State delete TX[$id_T$]  \Comment{Clear transactional context of $c$}
\State send \param{CommitResp \mib{ct}} to $c$
\EndEvent

\end{algorithmic}
\end{algorithm}

\subsection{Operations}
\label{sec:protocols:ops}
Algorithm 1 reports the client protocol. Algorithm 2 and Algorithm 3 report the protocols executed by a server to run a transaction, for the cases in which the server is or is not the transaction coordinator, respectively. Algorithm 4 describes the replication  and the UST protocols.

\noindent{\bf Start.} Client $c$ starts a transaction $T$ by picking at random a coordinator partition (denoted $p_n^m$) in the local DC and sending it a start request with $ust_c$. $p_n^m$ uses $ust_c$ to update $ust_n^m$, so that $p_n^m$ can assign to the new transaction a snapshot that is at least as fresh as the one accessed by $c$ in previous transactions. $p_n^m$ uses its updated value of $ust_n^m$ as snapshot for $T$.  $p_n^m$ also generates a unique identifier for $T$, denoted $id_T$, and inserts $T$ in a private data structure. $p_n^m$ replies to $c$ with $id_T$ and the snapshot timestamp $ust_n^m$.

Upon receiving the reply, $c$ updates $ust_c$ and evicts from the cache any version with timestamp lower than    or equal to $ust_c$. $c$ can prune  such versions because the UST protocol ensures that they are included in the snapshot installed by any partition in the system. This means that if, after pruning, there is a version X in the private cache of $c$, then $X.ct > ust$ and hence the freshest version of $x$ visible to $c$ is $X$.


\begin{algorithm}[t]
\algsize
\caption{Server $p_n^m$  - transaction cohort.}
\label{alg:srv:cohort}
\begin{algorithmic}[1]

\Event{receive \param{ReadSliceReq\ \mib{\chi, ust}} from $p_i^j$}\label{alg:slice}

\State $ust_n^m\lam  max\{ust_n^m, ust\}$\Comment{Update universal stable time}

\State $D \lam \emptyset$

\For{($k \in \chi$)}

\State $D_{sv} \lam \{ d: d.k == k \land d.ut \leq ust\}$\Comment{ Universally visible}
\State $D\lam D\cup\{argmax_{d.ut} \{ d \in D_{kv}\}\}$\Comment{Freshest visible vers. of $k$}
\EndFor
\State send \param{SliceResp \mib{D}} to $p_i^j$
\EndEvent

\Statex 

\Event{receive \param{PrepareReq\ \mib{id_T, ust, ht, D_i}}} from $p_i^j$
\State $HLC_n^m \lam max(Clock_n^m, ht+1,HLC_n^m+1)$\Comment{Update HLC}
\State $ust_n^m \lam max\{ust_n^m,ust\}$\Comment{Update universal stable time}
\State pt $\lam max\{HLC_n^m, ust_n^m\}$\Comment{Proposed commit time}
\State $Prepared_n^m \lam Prepared_n^m \cup \{ id_T, pt, D_i \}$\Comment{Append to pending list}
\State  send \param{PrepareResp \mib{id_T, pt}} to $p_i^j$
\EndEvent

\Statex

\Event{receive \param{CommitReq\ \mib{id_T, ct}}} from $p_i^j$
\State $HLC_n^m \lam max(HLC_n^m, ct, Clock_n^m)$\Comment{Update HLC}
\State $\langle id_T, pt, D \rangle \lam \{ \langle i, r,\phi  \rangle  \in Prepared_n^m :  i == id_T\}$
\State $Prepared_n^m \lam Prepared_n^m\setminus\{\langle id_T, pt, D\rangle\}$\Comment{Remove from pending}
\State $Committed_n^m \lam Committed_n^m \cup \{ \langle id_T, ct, D \}$\Comment{Mark to commit}

\EndEvent
\end{algorithmic}
\end{algorithm}

\begin{algorithm}[t!]
\algsize
\caption{ Server $p_n^m$  - Auxiliary functions.}
\label{alg:srv:aux}
\begin{algorithmic}[1]

\Function{update}{\mib{k,v,ut,id_T}}
\State  create $d: \langle d.k, d.v, d.ut, id_T, d.sr \rangle \lam \langle k, v, ut, id_T, m \rangle $
\State  insert new item $d$ in the version chain of key $k$
\EndFunction

\Statex

\Event{Every $\Delta_R$}
\State {\bf if}\ ($Prepared_n^m \neq \emptyset$)\ {\bf then}\ $ub \lam min_{\{p.pt\}}\{ p \in Prepared_n^m\} - 1$
\State {\bf else}\ $ub \lam max\{Clock_n^m, HLC_n^m\}$\ {\bf end if}

\State $\rho_n \lam Replicas(n)$ 

\If{($Committed_n^m \neq \emptyset$)}\Comment{Commit tx by increasing order of $ct$}
    \State $C \lam \{ \langle id, ct, D \rangle\} \in Committed_n^m : ct < ub$
    \For{($ T \lam \{\langle id, D \rangle\} \in (group\ C\ by\ ct$))}
    \For{($\langle id, D \rangle \in T$)}
        \State {\bf for}\ ($\langle k, v \rangle \in D$)\ {\bf do}\ update (\mib{k, v, ct, id})\ {\bf end for}
    \EndFor
    \State {\bf for}\ ($j \in \rho_n \land j \neq r $)\ {\bf do}\ send \param{Replicate \mib{T, ct}} to $p_n^j$\ {\bf end for}
    \State $Committed_n^m \lam Committed_n^m \setminus T$
\EndFor
    \State $VV_n^m[m] \lam ub$\Comment{Set version clock}
\Else
    \State $VV_n^m[m] \lam ub$\Comment{Set version clock}

    \State {\bf for}\ ($j \in \rho_n \land j \neq r $)\ {\bf do}\ send \param{Heartbeat \mib{VV_n^m[m]}} to $p_n^j$ \ {\bf end for} \Comment{Send heartbeat to replicas}

\EndIf
\EndEvent

\Statex
\Event{receive \param{Replicate \mib{T, ct}} from $p_n^j$}

\For{($\langle id, D \rangle \in T$)}
\For{($\langle k, v \rangle \in D$)}
\State update (\mib{k, v, ct, id})
\EndFor
\EndFor
\State $i \lam GetReplicaIdForDC(j)$
\State $VV_n^m[i] \lam ct$\Comment{Update version clock of i-th replica of n-th partition}

\EndEvent

\Statex
\Event{receive \param{Heartbeat t} from $p_n^j$}
\State $i \lam GetReplicaIdForDC(j)$
\State $VV_n^m[i] \lam t$\Comment{Update version clock of i-th replica in n-th partition}

\EndEvent

\Statex
\Event{every $\Delta_G$ time}\label{alg:gsv} \Comment{Gather global stable times from other DCs}
\State $GSV_n^m[j] \lam min\{VV_i^m[j]\}, \forall j=0\ldots M-1, \forall i = 0\ldots N-1$
\EndEvent

\Statex
\Event{every $\Delta_U$ time}\label{alg:ust} \Comment{Compute universal stable time}
\State $minGST \lam min\{GSV_i^m[j]\}, \forall j=0\ldots M-1, \forall i = 0\ldots N-1$\label{alg:usv:min}
\State $ust_n^m \lam max\{minGST, ust_n^m\}$\label{alg:usv:mono}\Comment{Enforce monotonicity of $ust_n^m$}
\EndEvent

\end{algorithmic}
\end{algorithm}

\pvs
~\\\noindent{\bf Read.}
For each key $k$ to read, $c$ searches the write-set, the read-set and the client cache, in this order. If an item corresponding to $k$ is found, it is added to the set of items to return, ensuring read-your-own-writes and repeatable-reads semantics. Reads for keys that cannot be served locally are sent to the transaction coordinator $p_n^m$ together with the transaction id. $p_n^m$ retrieves the snapshot corresponding to the transaction, and sends to each involved partition the set of keys to be read, in parallel. Because each DC only stores a subset of the full data set, some of the contacted partitions may belong to a remote DC that replicates the partitions where the keys belong. Remote DCs can be chosen depending on geographical proximity or on some load balancing scheme. 

Upon receiving a read request, regardless of whether it originates from the local DC or from a remote one, $p_n^m$ first updates its $ust_n^m$, if it is smaller than the transaction's snapshot (Alg. 3 Line 2). Next, the server returns, for each key, the version within the snapshot with the highest timestamp (Alg. 3 Lines 4--7).
As we shall see shortly, the commit protocol of \ts{} allows concurrent updates on the same key, both within a DC and in different DCs. This is typically the case in TCC systems to avoid costly validation protocols for update transactions~\cite{Akkoorath:2016,Spirovska:2018}. In case multiple versions of a key are assigned the same timestamp, \ts{} totally orders versions by a concatenation of timestamp, transaction id and source data center id, in this order. Once $p_n^m$ has received the reply from all the partitions contacted, $p_n^m$ sends the items to the client, which inserts them in its read-set.

\pvs 
~\\\noindent{\bf Write.}
Client $c$ locally buffers the writes in its write-set $WS_c$. If a key being written is already present in $WS_c$, then it is updated; otherwise, it is inserted in $WS_c$.

\pvs
~\\\noindent{\bf Commit.}
To finalize the transaction, the client sends a commit request to the coordinator with the content of $WS_c$, the id of the transaction and the commit timestamp of its last update transaction $hwt_c$, if any. The commit protocol of \ts{} is based on the two-phase commit (2PC) protocol. The coordinator contacts the partitions that store the keys that need to be updated  and sends them the corresponding updates and $hwt_c$. Note that some of the contacted partitions can belong to a remote DC. Each partition involved,  first updates its clock to be at least as high as the maximum between the transaction's snapshot timestamp and $hwt_c$. Then, each partition increases its clock and sends the updated clock value to the coordinator as a commit timestamp.  The proposed timestamp reflects causality because it is higher than both the snapshot timestamp and $hwt_c$.  Each partition also inserts the transaction id, the set of keys to be modified on the partition and the proposed timestamp in the queue of  pending transactions.  

The coordinator  picks the maximum among the proposed timestamps, sends it to the partitions involved in the transaction, clears the local context of the transaction and sends the commit timestamp to the client. Upon receiving the commit message, a partition increases its clock to match the commit time, if needed, and moves the transaction from the pending queue to the commit one, with the new commit timestamp.

\pvs
~\\\noindent{\bf Applying and replicating transactions.} 
Periodically, the effects of  transactions committed by  $p_n^m$ are applied on the $p_n^m$, in increasing commit timestamp order (Alg. 4 Lines 6-21). $p_n^m$  applies the modifications of transactions that have a commit timestamp strictly lower than the lowest timestamp present in the pending list. This timestamp represents the lower bound on the commit timestamps of future transactions on $p_n^m$. After applying the transactions, $p_n^m$ updates its local version clock and replicates the update operations in the applied transactions to its remote replicas.

If $p_n^m$ does not commit a transaction for a given amount of time, $p_n^m$ updates its local clock, and sends it to its peer replicas by means of a heartbeat message. This ensures the progress of the UST also in absence of updates.

\pvs
~\\\noindent{\bf Stabilization protocol.} Every $\Delta_G$ time units, partitions within a data center exchange the minimum of their version vectors to compute the global stable time ($GST$) of the local data center. Similarly to previous work~\cite{Du:2014,Akkoorath:2016}, \ts{} organizes nodes within a DC as a tree to reduce message exchange. The $GST$ is progressively aggregated from the leaves to the root, and then propagated from the root to all the nodes in the DC. Next, all the roots from each DC exchange their $GST$ values. 

Every $\Delta_U$ time units, the roots compute the $ust_n^m$ as the aggregate minimum of the received $GST$s  and propagate it to all the other nodes in the DC. 

\pvs

~\\\noindent{\bf Garbage collection.} 
Periodically the partitions in the system exchange the oldest snapshot corresponding to an active transaction ($p_n^m$ sends its current stable snapshot timestamp if it has no running transactions). The aggregate minimum determines the oldest snapshot $S_{old}$ that is visible to a running transaction. The partitions scan the version chain of each key backwards and keep the all the versions up to (and including) the oldest one within $S_{old}$. Previous versions are removed. The same protocol that computes the UST also computes $S_{old}$.

\setlength{\textfloatsep}{15pt}

\subsection{Correctness}
\label{sec:protocols:correctness}
We now provide an informal proof sketch that \ts{} provides causal consistency by showing that $i)$ reads observe a causally consistent snapshot and $ii)$ writes are atomic.

\begin{lemma}
\label{lemmas:1}
The snapshot time $sn_T$ of a transaction $T$ is always lower than the commit time of $T$, $sn_T < T.ct$.
\end{lemma}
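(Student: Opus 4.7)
The plan is to trace a single transaction $T$ through the commit protocol and show that every proposed timestamp produced in the prepare phase is already strictly greater than $sn_T$, so their maximum (which is $T.ct$) must also exceed $sn_T$.

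First I would unpack what $sn_T$ is: by Algorithm 2 (the StartTxReq handler), the coordinator stores $TX[id_T] \gets ust_n^m$ and returns this value to the client, so $sn_T$ equals the value of $ust$ carried in the subsequent CommitReq message (the coordinator retrieves it as $\langle ust\rangle \gets TX[id_T]$). When the coordinator forwards PrepareReq to each cohort $p_i^j$, it passes $ht = \max\{ust, hwt\}$, and in particular $ht \ge sn_T$.

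The key step is the HLC update performed by each cohort on receipt of PrepareReq (Algorithm 3):
\[
HLC_n^m \gets \max\bigl(Clock_n^m,\ ht+1,\ HLC_n^m+1\bigr).
\]
Because one of the arguments is $ht+1$ and $ht \ge sn_T$, we get $HLC_n^m \ge sn_T + 1 > sn_T$ after this assignment. The proposed timestamp is then $pt_i = \max\{HLC_n^m, ust_n^m\} \ge HLC_n^m > sn_T$. Since the coordinator sets the commit time as $T.ct = \max_{i : D_i \neq \emptyset}\{pt_i\}$ (well-defined because Commit is only invoked when $WS \ne \emptyset$, so at least one cohort participates), we obtain $T.ct \ge pt_i > sn_T$, which is the desired strict inequality.

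I expect the main obstacle to be a bookkeeping one rather than a conceptual one: making sure the chain of equalities/inequalities that identify $sn_T$ with the $ust$ field stored in $TX[id_T]$, forwarded in PrepareReq, and consumed by the cohort's HLC bump actually refers to the same scalar. In particular one must note that any intermediate increase of $ust_n^m$ at the coordinator between Start and Commit does not alter $TX[id_T]$, so $sn_T$ is frozen at start time and the argument above applies verbatim. The strictness ultimately rests on the explicit $+1$ appearing in the HLC update rule, which is precisely the mechanism the protocol uses to force the commit timestamp to reflect causality beyond the snapshot.
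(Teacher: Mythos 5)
Your proof is correct and follows essentially the same route as the paper's: the commit time is the maximum of the proposed prepare timestamps, and each cohort's proposed timestamp strictly exceeds the snapshot because of the $ht+1$ term in the HLC update. Your version is somewhat more careful than the paper's sketch in pinpointing that the strict inequality comes from that explicit $+1$ (Alg.~3 Line 10) rather than from the $\max$ on Line 12 alone, and in noting that $TX[id_T]$ freezes $sn_T$ at start time, but these are refinements of the same argument.
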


\begin{proof}
Let $t$ be a transaction with snapshot time $sn_T$ and commit time $ct$. The snapshot time is determined during the start of the transaction (Alg. 2 Line 4). The commit time is calculated in the commit phase of the 2PC protocol, as maximum value of the proposed prepare times of all partitions participants in the transaction (Alg. 2 Line 26). In order to reflect causality when proposing a prepare timestamp, each partition proposes higher timestamp than the snapshot timestamp (Alg. 3 Line 12). Thus, the commit time of a transaction, $ct$, is always greater than the snapshot time, $sn_T$.
\end{proof}

\begin{figure*}[b!]
  \begin{subfigure}[h]{0.5\textwidth}
        \centering
       \includegraphics[scale = 0.4]{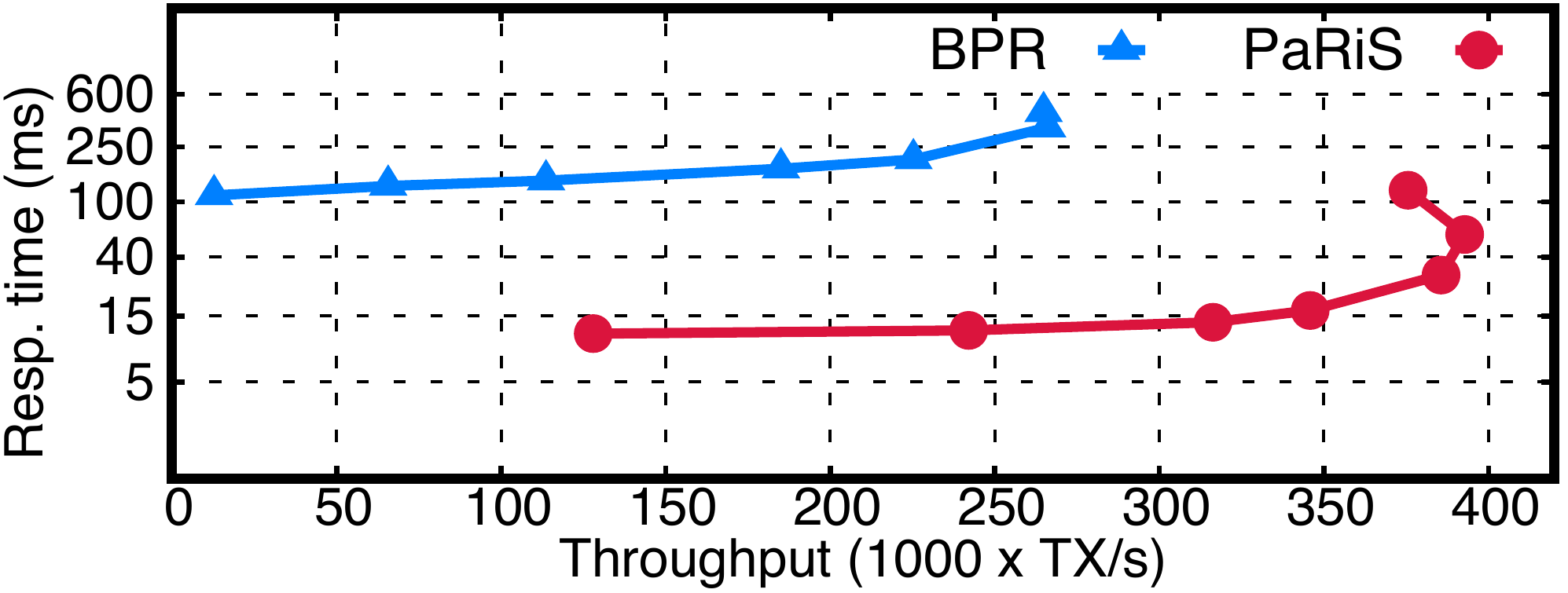}
        \caption{Throughput vs average TX latency (95:5 r:w ratio).}
        \label{fig:xput_lat:read_intensive}
    \end{subfigure}
    \begin{subfigure}[h]{0.5\textwidth}
    \centering
       \includegraphics[scale=0.4]{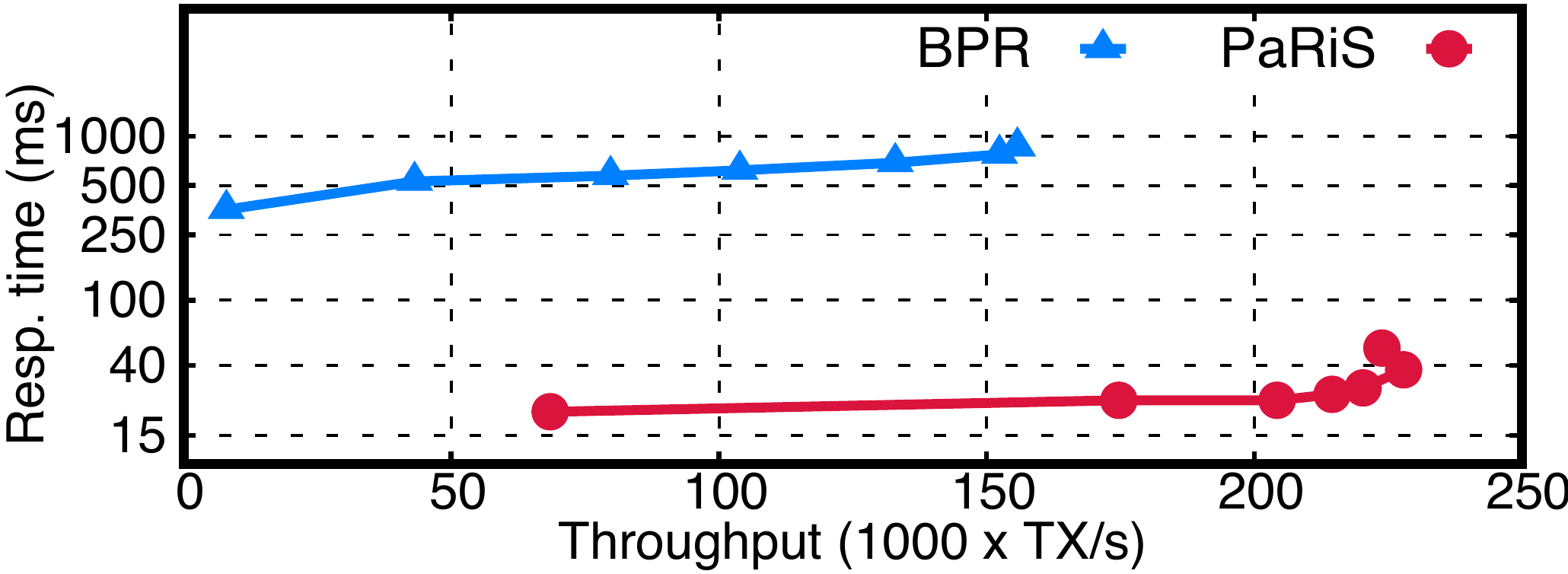}
        \caption{Throughput vs average TX latency (50:50 r:w ratio).}
        \label{fig:xput_lat:write_intensive}
    \end{subfigure}
\caption{Performance of \ts{} and BPR (logarithmic scale) with different 95:5 (a) and 50:50 (b) r:w ratios, 4 partitions involved per transaction (5 DCs, 45 partitions, replication factor is 2, 18 machines per DC). \ts{} outperforms BPR for  both read-heavy and write-heavy workloads.
}\label{fig:xput_lat}
\end{figure*}

\begin{proposition}
\label{propositions:1}
If an update $u_2$ causally depends on an update $u_1$, $u_1\leadsto u_2$, then $u_1.ut < u_2.ut$.
\end{proposition}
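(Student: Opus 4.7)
The plan is to reduce the claim to an ordering of commit timestamps and then proceed by induction on the length of a causal chain witnessing $u_1 \leadsto u_2$. Let $T_1$ and $T_2$ be the transactions that perform $u_1$ and $u_2$, respectively, which I assume to be distinct (within a single transaction all updates carry the same commit time, so $\leadsto$ between them cannot yield strict inequality). Because a version installed by a transaction is tagged with that transaction's commit timestamp, both in the cohort-side \texttt{update} routine (Alg.~4) and in the client-side write cache (Alg.~1, Line~33), one has $u_i.ut = T_i.ct$; hence it suffices to prove $T_1.ct < T_2.ct$.

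I would formalize the induction over chains of transactions $T_1 = T^{(0)}, T^{(1)}, \ldots, T^{(n)} = T_2$, where each consecutive pair is linked by one of two canonical rules: (a) \emph{session rule}: the same client commits $T^{(i)}$ before starting $T^{(i+1)}$; or (b) \emph{read-of-write rule}: some read inside $T^{(i+1)}$ returns a version written by $T^{(i)}$. For the base case ($n=1$), in rule~(a) the client has $hwt_c \geq T^{(0)}.ct$ when issuing the commit request for $T^{(1)}$ (Alg.~1, Lines~30--32). The coordinator then sets $ht = \max\{ust, hwt_c\} \geq T^{(0)}.ct$ (Alg.~2, Line~19), and each cohort advances its HLC to at least $ht+1$ and proposes $pt \geq HLC_n^m > T^{(0)}.ct$ (Alg.~3, Lines~11--12); the final commit time is the maximum of those proposals, so $T^{(1)}.ct > T^{(0)}.ct$. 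For rule~(b), a cohort only returns a version $X$ with $X.ut \leq sn_{T^{(1)}}$ (Alg.~3, Lines~4--6). Since $X$ was written by $T^{(0)}$, $X.ut = T^{(0)}.ct$, hence $T^{(0)}.ct \leq sn_{T^{(1)}}$; by Lemma~1, $sn_{T^{(1)}} < T^{(1)}.ct$, giving $T^{(0)}.ct < T^{(1)}.ct$. The inductive step is simply transitivity of~$<$.

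The main obstacle I anticipate is justifying that every causal chain between two updates can be rewritten in this canonical form using only the two rules above. The definition of $\leadsto$ admits intermediate operations that are reads (condition~(ii)) or arbitrary compositions (condition~(iii)), so a formal argument requires showing that any read operation appearing in a chain can be absorbed into the transaction that issued it, producing a link of type~(a) or~(b) between its surrounding writes. Once this bookkeeping is in place, the rest of the argument is a direct reading of the commit and read pseudocode.
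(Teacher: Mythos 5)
Your approach is essentially the paper's: reduce $u.ut$ to the commit timestamp of the writing transaction, split direct causal links into a session-order case (argued via $hwt_c$, $ht$, and the cohort's HLC advance in Alg.~3) and a read-from case (argued via the visibility bound $X.ut \leq sn_T$ plus Lemma~1), and close the transitive case by chaining. The paper's proof is exactly this three-case analysis, and your two base-case arguments match its Case~1 and Case~2 line for line.

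There is one concrete gap in your read-of-write rule~(b) as stated. You justify $T^{(0)}.ct \leq sn_{T^{(1)}}$ by appealing to the cohort-side read path (Alg.~3, Lines~4--6), but in \ts{} a read can also be served from the client's write-set, read-set, or private cache $WC_c$ (Alg.~1, Read). Versions sitting in $WC_c$ are precisely those whose commit timestamps \emph{exceed} the client's current $ust_c$ (older ones are pruned at transaction start), so for a cache-served read the inequality $u_1.ut \leq sn_{T^{(1)}}$ is false and your rule~(b) argument breaks. The paper handles this explicitly: if $u_1$ was read from the cache, then $c$ itself wrote $u_1$, necessarily in an earlier transaction (writes within the same transaction share a commit time, so no strict dependency arises), and the link collapses to your session rule~(a). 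Adding that sub-case --- every read-from link is either a server-served read, handled by your rule~(b), or a cache/write-set hit, which reduces to rule~(a) --- makes your canonicalization complete and the proof sound.
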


\begin{proof}
Let $c$ be the client that wrote $u_2$. There are three cases upon which $u_2$ can depend on $u_1$, described in Section~\ref{sec:model:cc}: \textit{1}) $c$ committed $u_1$ in a previous transaction; \textit{2}) $c$ has read $u_1$, written in a previous transaction and \textit{3}) $c$ has read $u_3$, and there exists a chain of direct dependencies that lead from $u_1$ to $u_3$, i.e. $u_1 \leadsto ... \leadsto u_3$ and $u_3 \leadsto u_2$. 

\textit{Case 1}. When a client commits a transaction, it piggybacks the last update transaction commit time $hwt_c$, if any, to its commit request for the transaction coordinator (Alg. 1 Line 27) which is, furthermore, piggybacked as $ht$ in its prepare requests to the involved partitions (Alg. 2 Line 23).
To reflect causality when proposing a commit timestamp, each partition proposes higher timestamp than both $ht$ and the snapshot timestamp (Alg. 3 Lines 10--14). The coordinator of the transaction chooses the maximum value from all proposed times from the participating partitions (Alg. 2 Line 26) to serve as commit time, $ct$, for all the updated items in the transaction. The new version of the data item is written in the key-value store with $ct$ as its update time, $ut$ (Alg. 4 Lines 2 and 13).
When $c$ commits the transaction that updates $u_2$, it piggybacks the commit time of the transaction that updated $u_1$. Hence, from the discussion above it follows that $u_1.ct < u_2.ct$. Because the commit time of a transaction is the update time of all the data item versions updated in the transaction (Alg. 4 Lines 2 and 13), we have $u_1.ut < u_2.ut$.

\textit{Case 2:} $c$ could have read $u_1$ either from $c$'s client cache or from the transaction causal snapshot $sn_T$. 

If $c$ has read $u_1$ from  $c$'s client cache, then $c$ has written $u_1$ either in a previous transaction in the same thread of execution or in the current one. If $c$ wrote $u_1$ in the same transaction where $u_2$ is also written, then it is not possible to have $u_1 \leadsto u_2$ because all the updates from that transaction will be given the same commit, i.e. update timestamp, indicating that $u_1.ut = u_2.ut$. Thus, $u_1$ must be written in a previous transaction and from \textit{Case 1} it follows that $u_1.ut < u_2.ut$.

Next, we will consider the case when $c$ read $u_1$ from the causal snapshot $sn_{T}$ that contains $u_1$.
When a transaction $T$ is started, the snapshot $sn_T$ is determined by Alg. 2 Line 2, $sn_T = max\{ust_c, ust_n^m\}$. From Alg. 3 Line 5 we have that $u_1.ut \leq ust = sn_T$. From Lemma~\ref{lemmas:1} it follows that $u_1.ut \leq sn_t < u_2.ct= u_2.ut$. Therefore, $u_1.ut < u_2.ut$.

\textit{Case 3:} If $u_2$ depends on $u_1$ because of a transitive dependency out of $c's$ thread-of-execution, it means that there exists a chain of direct dependencies that lead from $u_1$ to $u_2$, i.e., $u_1 \leadsto ... \leadsto u_3$ and $u_3 \leadsto u_2$. Each pair in the transitive-chain, belongs to either  \textit{Case 1} or \textit{Case 2}. Hence, the proof of \textit{Case 3} comes down to chained application of the correctness arguments from \textit{Case 1} and \textit{Case 2}, proving that each element has an update time lower than its successor's.
\end{proof}

\begin{proposition}
\label{propositions:2}
A partition vector clock $VV_n^m[i] = t$ implies that $p_n^m$ has received all updates from $i-th$ replica with commit time, $ct \leq t$.
\end{proposition}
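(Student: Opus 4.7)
The plan is to prove the proposition by examining the three sites in Algorithm~4 where $VV_n^m[i]$ can be written --- the self-assignment of $VV_n^m[m]$ inside the periodic ``Every $\Delta_R$'' handler, the Replicate handler, and the Heartbeat handler --- and showing that after each write the invariant is preserved. The argument reduces to establishing two structural properties at the partition that owns the $i$-th replica (call it $p_n^j$, which coincides with $p_n^m$ in the local case): (P1) the sequence of timestamps it publishes, either to itself (via $VV_n^j[j]$) or to $p_n^m$ (via Replicate/Heartbeat), is monotonically non-decreasing; and (P2) before a value $t$ is published, every transaction that $p_n^j$ has committed with $ct \le t$ has already been installed locally or dispatched in a Replicate message with timestamp at most $t$.

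For the local case, (P2) is immediate from the inner loop of ``Every $\Delta_R$'', which drains the committed queue in increasing $ct$ order and applies every transaction with $ct$ strictly below $ub$ before assigning $VV_n^m[m] \gets ub$. (P1) reduces to showing that no future local commit can land at a timestamp already covered by $ub$: by the HLC-update rule in the Prepare handler of Algorithm~3, any subsequent prepare proposes $pt$ strictly larger than the $HLC_n^m$ captured when $ub$ was set, and since the final commit time is the coordinator's maximum over all participants' proposals, it too must exceed $ub$. For the remote cases, FIFO channels between $p_n^j$ and $p_n^m$ combined with (P1) and (P2) at the sender give the result by a straightforward induction on the messages received: when a Replicate or Heartbeat carrying timestamp $t$ arrives and sets $VV_n^m[i] \gets t$, all earlier messages from $p_n^j$ have already been delivered, and by (P2) those earlier messages collectively include a Replicate for every update that $p_n^j$ committed with commit time at most $t$.

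The delicate step is property (P1) in its quantitative form, namely that once $p_n^j$ has published some value $t$, every subsequent transaction it commits has $ct > t$. This requires a careful case analysis of how ``Every $\Delta_R$'' chooses $ub$ --- the two branches being $\min\{p.pt\} - 1$ over pending prepares and $\max\{Clock_n^j, HLC_n^j\}$ when the prepared queue is empty --- combined with the rule $HLC_n^m \gets \max(Clock_n^m, ht+1, HLC_n^m+1)$ that strictly advances the HLC on every new prepare. The second branch is the subtler one; there one must also rely on the strict monotonicity of $Clock_n^j$ between successive events to conclude that a fresh prepare proposes a timestamp strictly above the previously published $ub$. Everything else follows routinely from the monotonicity of HLCs and from the FIFO guarantee on inter-partition channels.
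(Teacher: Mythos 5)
Your proposal is correct and follows essentially the same route as the paper's proof sketch: a split into the local case (where $ub$, taken as the minimum prepared timestamp minus one, lower-bounds all future commit times thanks to HLC monotonicity) and the remote case (where updates are replicated in increasing commit-timestamp order over FIFO channels, so receipt of a message tagged $t$ implies receipt of everything below $t$). Your explicit (P1)/(P2) invariants and your attention to the heartbeat path and the empty-prepared-queue branch of the $ub$ computation are refinements the paper's sketch glosses over, but they do not change the underlying argument.
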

\begin{proof}
We need to show that this proposition is valid for both local and remote updates. To prove the former, we show that there are no pending local updates with commit timestamp $ct \leq t$. When $p_n^m$ updates the local replica vector clock entry $VV_n^m[r]$, it finds the minimum prepare timestamp of all transactions that are currently in the prepare phase (Alg 4. Line 6). Because the commit time is calculated as the maximum of all prepare times (Alg 2. Line 26) and the HLC clock is monotonic (Alg. 3 Lines 10 and 16), it is guaranteed that all future transactions will receive a commit time which is greater than or equal to this minimum prepare timestamp. So, when the $VV_n^m[r]$ is set to the minimum of the prepare times of all transactions in the prepare queue minus 1 (Alg. 4 Line 6), $p_n^m$ has already received all updates for the snapshot $VV_n^m[i] = t$.

To show the latter, we use prove by contradiction. Assume there is a remote update $u$ from $i-th$ replica such that $u.ct < t$, and $p_n^m$ has not received $u$. By Alg. 4 Line 30, the partition would have received an update $u'$ such that $u'.ct = t$. The updates are sent in the order of their commit timestamps (Alg. 4 Lines 9--16). Hence, if $u'.ct > u.ct$ the $p_n^m$ could not have received another update $u'$ before $u$. Therefore, $u.ct > t$, implying that $u.ct \nless t$, leading to the contradiction.
\end{proof}

\begin{proposition}
\label{propositions:3}
Snapshots in \ts{} are causal. 
\end{proposition}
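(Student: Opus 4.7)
My plan is to fix an arbitrary transaction $T$ with snapshot timestamp $sn_T$ and client cache $WC_c$, pick any version $V$ that $T$ reads for some key $v$, and any causal predecessor $U \leadsto V$. The goal is to show that whatever version is returned for $U$'s key in $T$ is ``at least as fresh as $U$ and no fresher than is allowed,'' matching the snapshot definition in Section~\ref{sec:background:tcc}. The two workhorse facts are Proposition~\ref{propositions:1} ($U.ut < V.ut$) and Proposition~\ref{propositions:2} (each $VV_n^m[i]=t$ certifies installation of all updates from replica $i$ with $ct\le t$). Combined with the UST gossip (Alg.~4, lines marked \ref{alg:gsv}--\ref{alg:ust}), these imply a key auxiliary claim I would state first: \emph{if $sn_T = s$, then every partition in every DC has already applied every update with $ut \le s$}, because $s$ is obtained as a double minimum of $VV$ entries over all DCs and partitions.

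Given this auxiliary claim, I would split on where $V$ is sourced. \textbf{Stable-snapshot case:} $V.ut \le sn_T$; by Proposition~\ref{propositions:1}, $U.ut < V.ut \le sn_T$, so by the auxiliary claim $U$ (and every $U'$ with $U \leadsto U' \leadsto V$, since their timestamps are also bounded by $V.ut$) has been installed on the partition hosting $u$. The read handler at that partition (Alg.~3, lines 4--7) returns the freshest $U^\star$ with $U^\star.ut \le sn_T$; since all candidate intermediates $U'$ satisfy $U'.ut < V.ut \le sn_T$, no such $U'$ is ``skipped over,'' which is exactly the no-skipping clause of the TCC snapshot definition. Ties on $ut$ are broken by the total order on $\langle ut, id_T, sr\rangle$, so the choice of $U^\star$ is deterministic across replicas and hence consistent no matter which DC serves the read.

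\textbf{Cache case:} $V \in WC_c$ means $c$ wrote $V$ in some earlier transaction $T'$ in the same session with $V.ut = T'.ct$. Any dependency $U \leadsto V$ is of one of the three types in the proof of Proposition~\ref{propositions:1}: written by $c$ earlier, read by $c$ from the snapshot of $T'$, or reached via a transitive chain. For the first type, $U$ is either still in $WC_c$ (returned from cache) or was evicted because $ust_c \ge U.ut$ at some later transaction; monotonicity of $ust_c$ through the piggybacking in Alg.~1 line 2 and Alg.~2 line 2 gives $sn_T \ge ust_c \ge U.ut$, so $U$ falls into the stable-snapshot case. For the second type, $U$ was in $sn_{T'}$, and monotonicity of assigned snapshots (enforced again through piggybacking of $ust_c$) yields $sn_T \ge sn_{T'} \ge U.ut$, again reducing to the stable case. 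The transitive case is handled by chaining the first two.

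The step I expect to be trickiest is the no-skipping sub-claim when reads and cache entries are mixed within the same transaction: concretely, that an $X$ returned from the cache for key $x$ cannot coexist with a $Y$ returned from the server for $y$ such that some $X'$ with $X \leadsto X' \leadsto Y$ has slipped between them. Here I would argue that any such $X'$ was either written by $c$ (hence in $WC_c$, and by the cache-lookup-before-server rule in Alg.~1 lines 11--12 would be returned in place of $X$, contradiction) or written by another client with $X'.ut < Y.ut \le sn_T$, in which case the server read for $x$ would itself return a version at least as fresh as $X'$, again contradicting that $X$ was the returned value. Everything else reduces to careful bookkeeping over Propositions~\ref{propositions:1}--\ref{propositions:2} and the monotonicity invariants on $ust_c$, $ust_n^m$, and $HLC_n^m$.
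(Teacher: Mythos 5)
Your proposal is correct and follows essentially the same route as the paper's (much terser) proof sketch: it combines Proposition~\ref{propositions:1} (causal predecessors have strictly smaller timestamps), Proposition~\ref{propositions:2} plus the UST aggregation (every version with timestamp at most $sn_T$ is installed at every replica, so the freshest-within-snapshot read cannot skip an intermediate dependency), snapshot monotonicity via piggybacking of $ust_c$, and the client cache for the client's own not-yet-stable writes. One minor slip that does not affect correctness: in your ``mixed'' sub-case the appeal to what ``the server read for $x$'' would return is moot because $x$ is served from the cache and never reaches a server, but that whole case is in fact vacuous --- after pruning, a cache entry $X$ satisfies $X.ut > sn_T$ while any server-returned $Y$ satisfies $Y.ut \le sn_T$, so $X \leadsto Y$ would contradict Proposition~\ref{propositions:1}.
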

\begin{proof} 
To start a transaction, a client $c$ piggybacks the freshest snapshot it has seen, ensuring the monotonicity of the snapshot seen by $c$ (Alg. 2 Line 2). Commit timestamps reflect causality (Alg. 2 Line 26), and UST tracks a lower bound on the snapshot installed by every partition in all DCs (Alg. 4 Lines 36-38). If $X$ is within the snapshot of a transaction, so are its dependencies (Proposition~\ref{propositions:1}). On top of the snapshot provided by the coordinator, $c$ also can read the writes, that are not yet included in the snapshot, from the cache. These writes cannot depend on items created by other clients that are outside the snapshot visible to $c$.
\end{proof}

\begin{proposition}
\label{propositions:4}
Writes are atomic.
\end{proposition}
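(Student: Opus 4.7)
The plan is to decompose atomicity into two independent claims and handle them in turn. First I would show a \emph{timestamp-uniformity} property: every item version produced by a single committing transaction $T$ is stamped with one common update time equal to $T.ct$. Second I would show a \emph{simultaneous-visibility} property: for any reader $T'$ with snapshot time $sn_{T'}$, either every cohort contacted by $T'$ that stores a key written by $T$ returns $T$'s version of that key, or none of them does. Combined, these yield all-or-nothing exposure of $T$'s write set.

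For timestamp uniformity I would trace the commit path. The coordinator of $T$ runs 2PC over the partitions that hold $T$'s write set (Alg.~2 Lines~22--25), gathers one prepare time $pt_i$ per participant, and sets a single $ct = \max_i pt_i$ (Alg.~2 Line~26), which it disseminates unchanged to every participant. Each participant then moves $T$ from $Prepared_n^m$ into $Committed_n^m$ tagged with this same $ct$ (Alg.~3 Lines~18--22), and when the deferred-apply step fires it invokes $update(k, v, ct, id_T)$ for every $(k,v) \in D$ (Alg.~4 Lines~8--10, feeding into Line~2). Hence every version created by $T$ satisfies $d.ut = T.ct$, so no two versions of $T$ ever carry distinct update timestamps.

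For simultaneous visibility I would argue as follows. A reader $T'$ fixes $sn_{T'}$ at start time (Alg.~2 Lines~1--5), and each cohort returns only versions with $d.ut \le sn_{T'}$ (Alg.~3 Lines~4--7). Combined with uniformity, this yields an immediate dichotomy: if $T.ct \le sn_{T'}$ the visibility filter passes at every cohort storing a key of $T$, and if $T.ct > sn_{T'}$ it fails at every such cohort. The only residual danger is that a cohort has not yet physically installed $T$'s writes at the moment its read slice is served, in which case a key of $T$ that is visible ``on paper'' under $sn_{T'}$ would effectively be missing. Ruling this out is the main obstacle. I would close it by combining the UST computation of Alg.~4 Lines~36--38 with Proposition~\ref{propositions:2}: the coordinator's choice $sn_{T'} \le ust_n^m$ forces, by the definition of UST as the global aggregate minimum of the $VV$ vectors, that every partition in every DC has advanced each $VV[\cdot]$ past $sn_{T'}$; and Proposition~\ref{propositions:2} promotes this into the actual installation of every transaction with $ct \le sn_{T'}$. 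Monotonicity of installed state, together with the cohort's $ust$ refresh in Alg.~3 Line~2, preserves the guarantee at the moment the slice is served.

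Finally I would clean up the client-cache path. A read may also return an item from $WC_c$ (Alg.~1 Lines~11--12), but the cache is populated only at the end of a successful commit by moving the whole write set across under the single commit timestamp (Alg.~1 Lines~30--31), and pruned only once $ust_c$ has passed that timestamp (Alg.~1 Line~6); consequently cache-backed reads obey the same all-or-nothing rule as snapshot reads. I expect the simultaneous-visibility step to be the delicate one, because it is the only place where correctness of the stabilization protocol and of the 2PC timestamping protocol must interlock; the remaining pieces amount to bookkeeping against the pseudocode.
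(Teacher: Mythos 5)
Your proposal is correct and follows essentially the same route as the paper's proof sketch: uniform commit timestamps for all of $T$'s versions via 2PC, plus snapshot-based visibility bounded by the UST, which (being the aggregate minimum over all version vectors, combined with Proposition~\ref{propositions:2}) guarantees that every transaction with $ct \le sn_{T'}$ is already installed at every cohort. You are merely more explicit than the paper about the visibility dichotomy, the ``not yet physically installed'' hazard, and the client-cache path, all of which the paper's sketch treats implicitly.
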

\begin{proof}
Although updates are made visible independently on each partition $p_n^m$ involved in the commit phase, either all updates are made visible or none of them are, i.e. the atomicity is not violated. All updates from a transaction belong to the same snapshot because they all receive the same commit timestamps (Alg. 2 Line 27). The updates are being installed in the order of their commit timestamps (Alg. 4 Lines 9--16). The visibility of the item versions is determined by the transaction snapshot (Alg. 3 Line 5), which is based on the value of $p_n^m$'s universal stable time $ust_n^m$. $ust_n^m$ is computed by the UST protocol as the aggregate minimum of the version vectors entries of all partitions of all data centers (Alg. 4 Lines 34--38).
\end{proof}

\ts{} implements TCC, as every transaction reads from a causally consistent snapshot (Proposition~\ref{propositions:3}) that includes all effects (Proposition~\ref{propositions:4}) of its causally dependent transactions.

\section{Evaluation}
\label{sec:eval}
\begin{figure*}[t!]
  \begin{subfigure}[h]{0.5\textwidth}
        \centering
       \includegraphics[scale = 0.35]{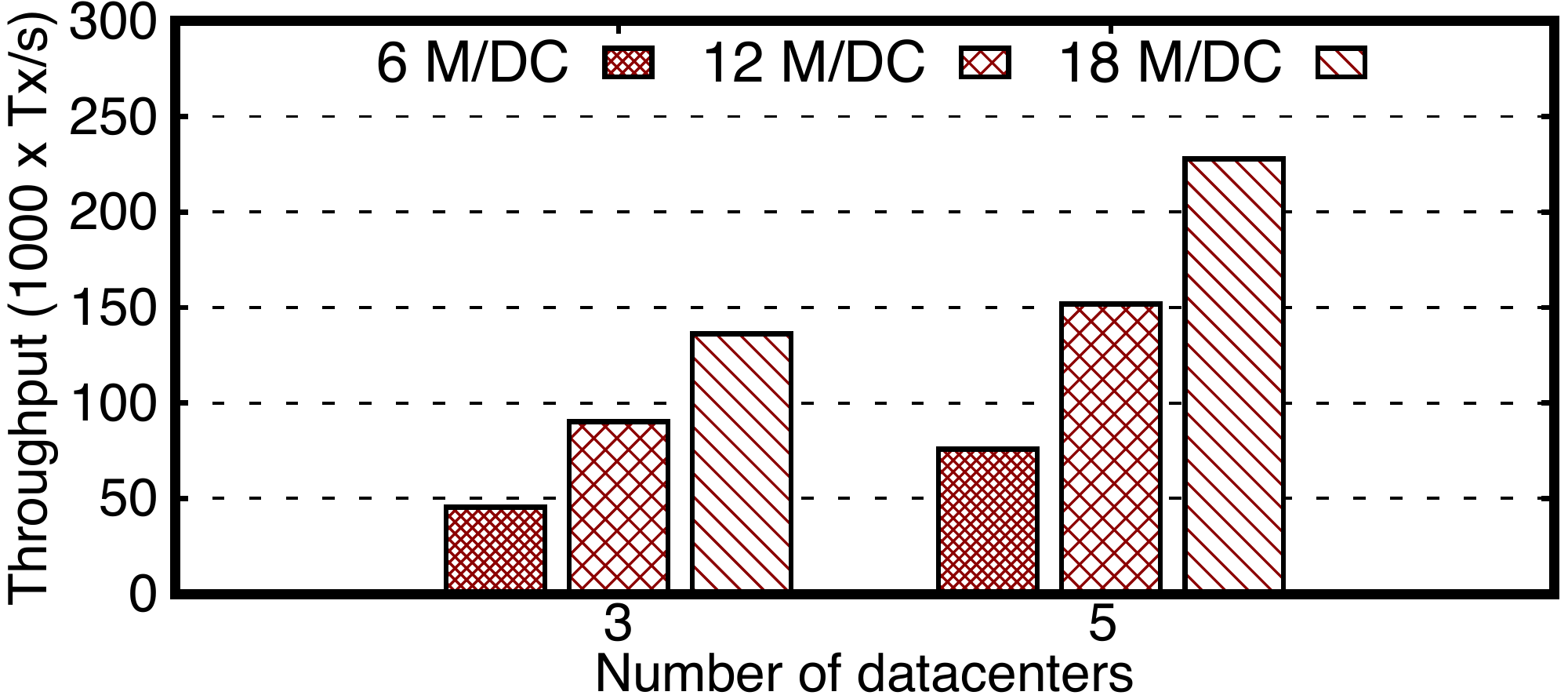}
        \caption{Throughput when varying the number of machines per DC.}
        \label{fig:scal:machines_dc}
    \end{subfigure}
    \begin{subfigure}[h]{0.5\textwidth}
    \centering
       \includegraphics[scale=0.35]{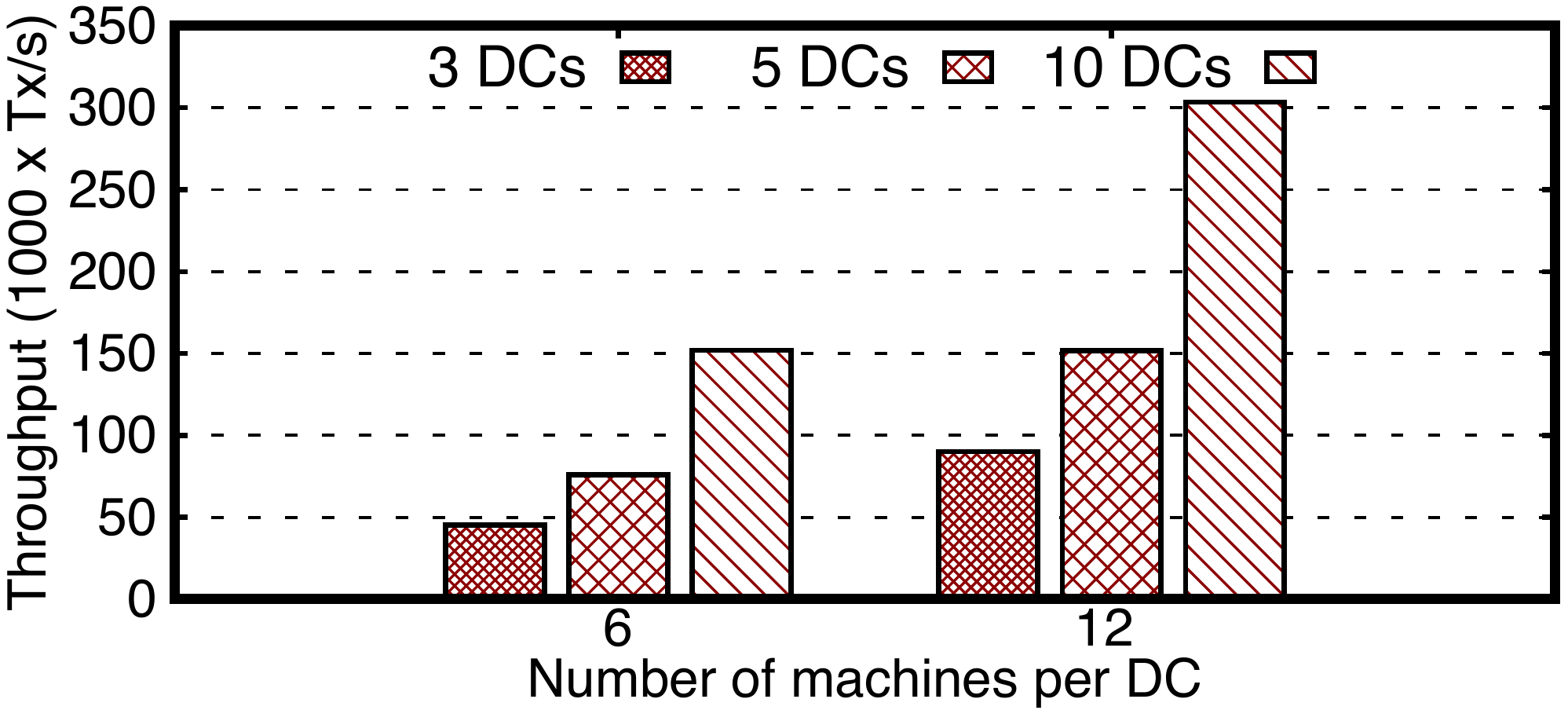}
        \caption{Throughput when varying the number of DCs.}
        \label{fig:scal:dcs}
    \end{subfigure}
\caption{Throughput achieved by \ts{} when increasing the number of machines per DC (a) and DCs (b). \ts{} achieves good scalability both when increasing the number of machines per DCs and DCs.
}\label{fig:scal}
\end{figure*}

\begin{figure*}[b!]
  \begin{subfigure}[h]{0.5\textwidth}
        \centering
       \includegraphics[scale = 0.35]{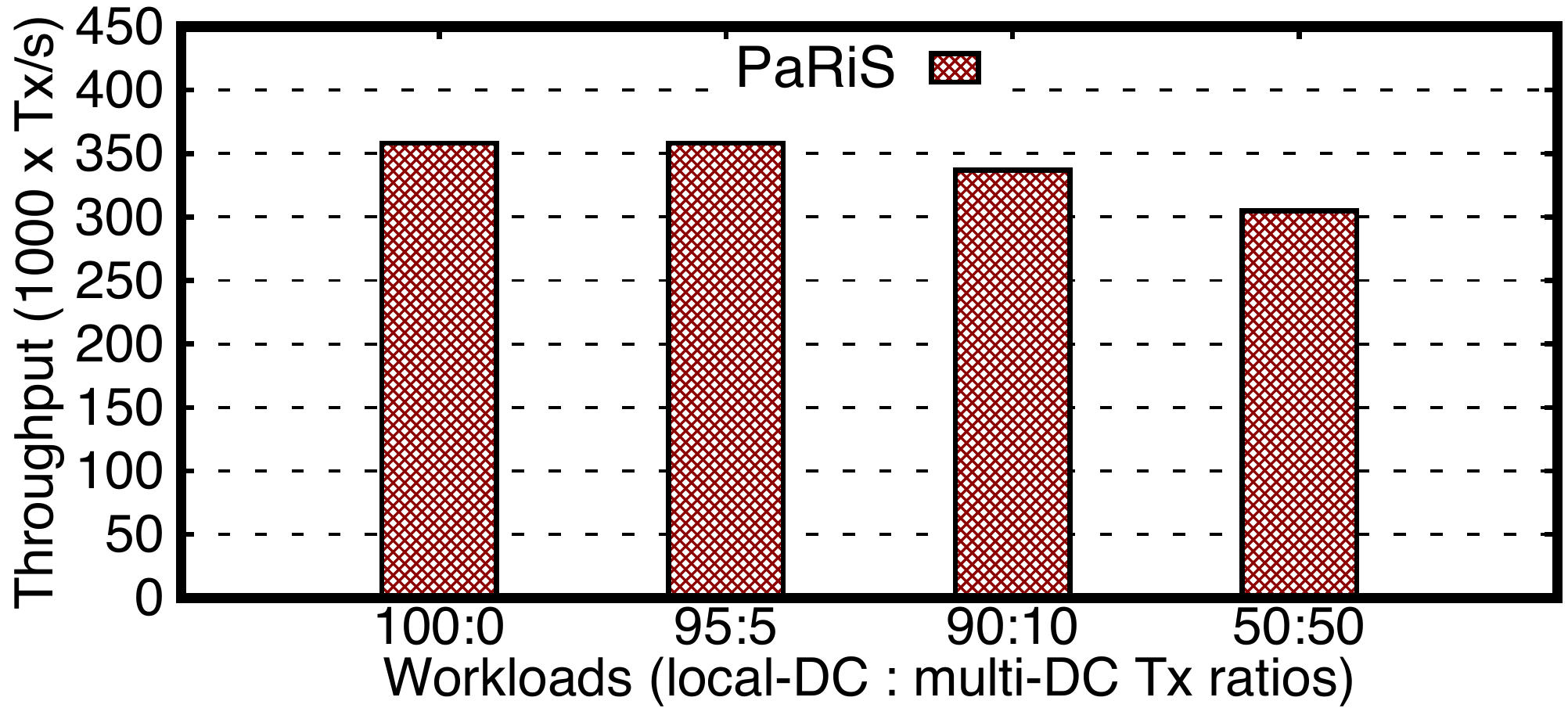}
        \caption{Throughput when varying the locality of the transactions.}
        \label{fig:locality:xput}
    \end{subfigure}
    \begin{subfigure}[h]{0.5\textwidth}
    \centering
       \includegraphics[scale=0.35]{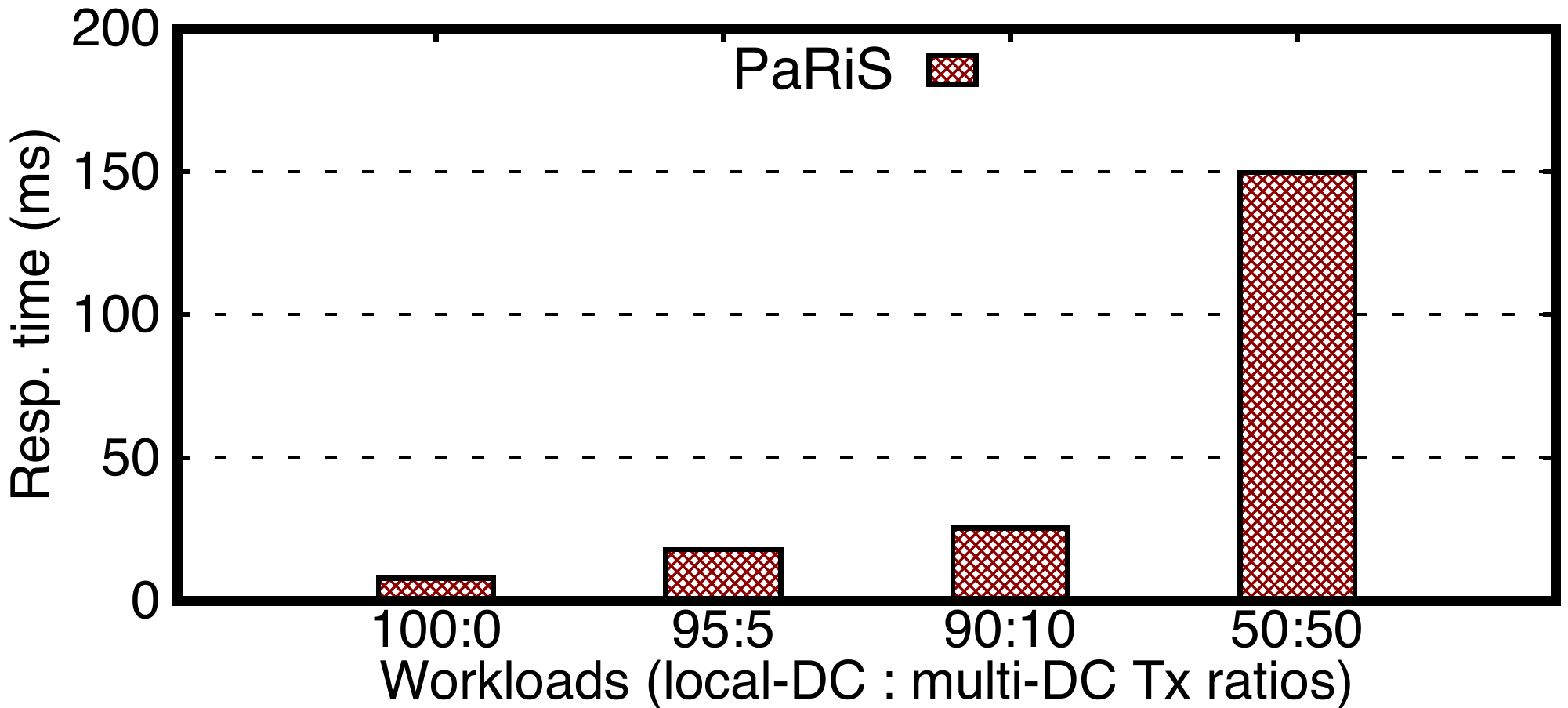}
        \caption{Latency when varying the locality of the transactions.}
        \label{fig:locality:lat}
    \end{subfigure}
\caption{Throughput (a) and latency (b) achieved by \ts{} when varying the locality of the transactions with 100:0, 95:5, 90:10 and 50:50 local-DC:multi-DC ratio.
}\label{fig:locality}
\end{figure*}

\noindent{\bf Competitor system.} To assess  the advantages of having non-blocking reads, we compare \ts{} against a blocking protocol, that we call Blocking Partial Replication, or BPR. In BPR the snapshot of a transaction $T$ of client $c$ is determined as the maximum of the highest causal snapshot seen by $c$ and the clock value of the transaction coordinator. 
BPR uses one timestamp to encode transactional snapshots, so we can compare fairly the resource efficiency of \ts{} versus the one of BPR. BPR also favors the freshness of the snapshots that are visible to transactions. BPR, however, implies having blocking transactional reads, because the server must ensure that the returned version  belongs to a causal snapshot. To this end, a read operation with snapshot timestamp $t$ is blocked on a partition until the partition has applied all local and remote transactions with timestamp up to $t$.

\subsection{Experimental environment}
\noindent{\bf Platform.} We consider a geo-replicated setting deployed across up to 10 replication sites on Amazon EC2 (North Virginia, Oregon, Ireland, Mumbai, Sydney, Canada, Seul, Frankfurt, Singapore and Ohio). When using 3 DCs, we use Virginia, Oregon and Ireland. When using 5 DCs, we use the previously mentioned 3 DCs plus Mumbai and Sydney. In each DC we use up to 18 servers (c5.xlarge instances with 4 VCPUs and 8 GB of RAM). The replication factor is 2. We choose this value because it allows us to use 3 as minimum number of DCs in our experiment and use partial replication. 

We spawn one client process per partition in each DC. Clients are collocated with the server partition they use as a transaction coordinator. The clients issue requests in a closed loop. To generate different load conditions, we spawn different number of threads per client process. Depending on the type of the workload or the protocol, a different number of threads is needed to saturate the target system.  
Each ``dot" in the curve plots we report corresponds to a different number of active threads per client process.

\pvs
~\\\noindent{\bf Implementation.} We implement \ts{} and BPR in the same C++ code-base. Both protocols implement the last-writer-wins rule for convergence. We use Google Protobufs for communication, and NTP to synchronize physical clocks. The stabilization protocols run every 5 milliseconds. 
\bigskip
\pvs
~\\\noindent{\bf Workloads.} We use workloads with 95:5 and 50:50 r:w ratios that correspond to the update-heavy (A) and read-heavy (B) YCSB workloads~\cite{Cooper:2010}. These are standard workloads also used to benchmark other TCC systems~\cite{Akkoorath:2016,Mehdi:2017,Zawirski:2015,Spirovska:2018}.  Transactions generate the three workloads by executing 19 reads and 1 write (95:5), and 10 reads and 10 writes (50:50).  Hence, in each workload a transaction executes 20 operations per transaction. A transaction first executes all the reads in parallel, and then all the writes in parallel. 

A transaction can target only partitions in the local DC, or can touch random partitions in remote DCs. In the first case, we say that a transaction is ``local-DC"; else, we say it is ``multi-DC". When accessing a remote partition, a client can choose among two replicas. We assign to every client in a DC the same preferred remote replica for each partition. We vary the preferred replica in the DCs using a round-robin assignment, to balance the load. To evaluate the effect of the partial replication, we use workloads with 100:0, 95:5, 90:10 and 50:50 local-DC:multi-DC ratios.

The default workload we consider uses the 95:5 r:w ratio, 95:5 local-DC:multi-DC ratio and runs transactions that involve 4 partitions on a platform deployed over 90 machines spread over 5 DCs. The default deployment has 45 partitions that are replicated with replication factor 2. Hence, each DC has a total of 18 machines. 

We also consider variations of this workload in which we change the value of one parameter and keep the others at their default values. Transactions access keys within a partition according to a zipfian distribution, with parameter 0.99, which is the default in YCSB and resembles the strong skew that characterizes many production systems~\cite{Atikoglu:2012,Nawab:2015,Balmau:2017}. We use small items (8 bytes), which are prevalent in many production workloads~\cite{Atikoglu:2012,Nawab:2015}.

\subsection{Latency and throughput}
\label{sec:eval:lat_xput}
\pvs
~\\\noindent{\bf Blocking vs. non-blocking.} Figure~\ref{fig:xput_lat:read_intensive} and Figure~\ref{fig:xput_lat:write_intensive} report the average transaction latency vs. throughput achieved by \ts{} and BPR with the 95:5 (the default) and with the 50:50 r:w ratios.
In the read-dominated case, \ts{} achieves up to 5.91x lower response times and up to 1.47x higher throughput than BPR. \ts{} also achieves up to 20.56x lower response times and up to 1.46x higher throughput than BPR in the write-dominated workload. 
\ts{} achieves lower latencies because it never has to wait for a snapshot to be installed. \ts{} achieves higher throughput because it does not incur any overhead to block/unblock read requests. Because BPR is a blocking protocol, it needs a higher number of concurrent client threads to fully utilize the processing power left idle by blocked reads. This creates more contention on the physical resources and more synchronization overhead to block and unblock reads, which ultimately leads to lower throughput. 

\pvs
~\\\noindent{\bf Blocking time.}
The average blocking time of the read phase of a transaction in BPR is 29 ms for the top throughput in the read-dominated workload (Figure~\ref{fig:xput_lat:read_intensive}) and 41 ms for the top throughput in the write-dominated workload (Figure~\ref{fig:xput_lat:write_intensive}).

\subsection{Scalability}
\label{sec:eval:scal}

\pvs
~\\\noindent{\bf Varying the number of machines per DCs.} Figure~\ref{fig:scal:machines_dc} reports the throughput achieved by \ts{} when using 6, 12 and 18 machines/DC. We consider two geo-replicated deployments that use 3 and 5 DCs. In both cases, \ts{} achieves the ideal improvement of 3x when scaling from 6 to 18 machines/DC. This result showcases the ability of \ts{} to scale horizontally regardless of the number of DCs on which it is deployed.

\pvs
~\\\noindent{\bf Varying the number DCs.} Figure~\ref{fig:scal:dcs} reports the throughput achieved by \ts{} when deployed on 3, 5 and 10 DCs. We consider two cases corresponding to 6 and 12 machines/DC. In both cases \ts{} achieves the ideal improvement of 3.33x, when scaling from 3 to 10 DCs. This result shows that \ts{} scales well to higher numbers of DCs for different sizes of the platform within each DC.

\subsection{Varying data access locality}
\label{sec:eval:wrkload}
Figure~\ref{fig:locality:xput} reports the maximum throughput achieved by \ts{} when varying the locality ratio (local-DC:multi-DC) of transactions from 100:0 to 50:50.  Figure~\ref{fig:locality:lat} shows the average transaction latency corresponding to the throughput values reported in Figure~\ref{fig:locality:xput}.  Performance deteriorates as the percentage of local accesses decreases. The maximum achievable throughput  drops slightly, from 350 to 300 KTx/sec. Latency is more penalized, increasing from  8 ms to 150 ms. 
We note that the number of threads needed to saturate the system increases as the locality decreases (from 32 to 512 in this case), because requests spend much of their times traveling across DCs. This explains why the maximum throughput decreases only by 16\% as opposed to the order-of-magnitude increase in latency. 

As any partially replicated system, \ts{} targets workloads with high locality in the data access pattern. In case of limited locality, the performance penalty incurred by \ts{}, and partial replication in general, is the inevitable price to pay to enable higher storage capacity.

\subsection{Data staleness}
\label{sec:eval:visibility}
We measure the staleness of the data returned by \ts{} by measuring the {\em visibility latency} of updates. The visibility latency of an update $X$ in $DC_i$ is the difference between the wall-clock time when $X$ becomes visible in $DC_i$ and the  wall-clock time when $X$ was committed in its original DC.  Figure~\ref{fig:vis_lat} shows the CDF of the update visibility latency achieved by \ts{} and BPR  with 5 DCs and the default workload. 
The CDFs are computed as follows: we first obtain the CDF  on every partition and then we compute the mean for each percentile. 

BPR achieves lower update visibility latency than \ts{}. The update visibility time in \ts{} is higher than in BPR (with an around 200 ms difference in  the worst case). That is to be expected because UST identifies a lower bound of the update times of transactions applied in the whole cluster. BPR effectively trades data freshness for performance, because it exposes more recent snapshots of the data at the cost of blocking reads, hence achieving much lower performance than \ts{}.

\begin{figure}[]
 \vspace{-5mm}
 \begin{center}
  \includegraphics[scale = 0.4]{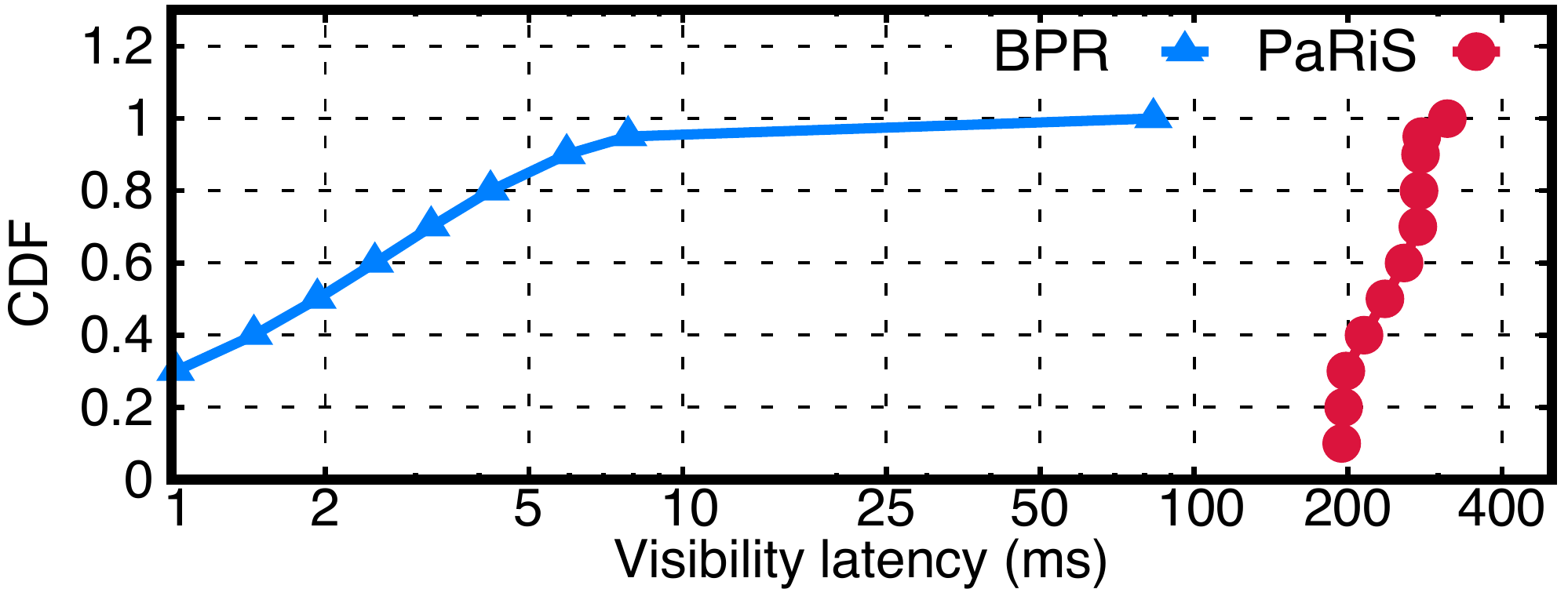}
  \end{center}
   \caption{ \ts{} has higher update visibility latency than BPR (logarithmic scale).}
      \label{fig:vis_lat}
\end{figure}

\section{Related work}
\label{sec:relwork}
Table~\ref{tab:rw} classifies existing CC systems according to the transactional model they expose, the capability of achieving non-blocking parallel reads, support for partial replication, and meta-data requirements. The table focuses on systems that target the system model described in Section~\ref{sec:model:model}. 

The vast majority of the systems assume full replication and provide none or restricted transactional capabilities. This class of systems includes COPS~\cite{Lloyd:2011}, Eiger~\cite{Lloyd:2013}, ChainReaction~\cite{Almeida:2013}, Orbe~\cite{Du:2013}, GentleRain~\cite{Du:2014}, Bolt-on CC~\cite{Bailis:2013}, Contrarian~\cite{Didona:2018}, POCC~\cite{Spirovska:2017}, CausalSpartan~\cite{Roohitavaf:2017}, COPS-SNOW~\cite{Lu:2016} and EunomiaKV~\cite{Gunawardhana:2017}. All these systems implement one-shot read-only transactions, and only Eiger additionally supports one-shot write-only transactions.

A few systems support partial replication, i.e., Saturn~\cite{Bravo:2017}, C$^3$~\cite{Fouto:2018}, Karma~\cite{Mahmood:2018} and the one by Xiang and Vaidya~\cite{Xiang:2018}. These systems, however, implement only single-object read and write operations. Among them, only Karma discusses extensions to support read-only transactions by using an approach similar to Eiger's.

To the best of our knowledge, only four systems implement TCC. Among these, OCCULT\footnote{OCCULT may retry read operations multiple times, instead of blocking the read. Retrying has the same effect on latency of blocking the read until the correct version to read is available on the sever that processes the operation.}~\cite{Mehdi:2017}  and Cure~\cite{Akkoorath:2016} can block reads on a node waiting for a snapshot to be installed on such node. Wren~\cite{Spirovska:2018} and AV~\cite{Tomsic:2018} avoid blocking by identifying stable snapshots in a way that is similar to \ts{}. All these systems, however, target full replication.

\begin{table}[]
 \vspace{-5mm}
\scriptsize
\begin{tabular}{|c|c|c|c|c|}
\hline
\textbf{System}        & \textbf{Txs} & \textbf{Nonbl. reads} & \textbf{Partial rep.} & \textbf{Meta-data} \\ \hline\hline
COPS~\cite{Lloyd:2011}                   & ROT                   & $\checkmark$                    & \xmark                     & O$|$deps$|$            \\ \hline

Eiger~\cite{Lloyd:2013}                  & ROT/WOT             & $\checkmark$                    & \xmark                     & O$|$deps$|$            \\ \hline
ChainReaction~\cite{Almeida:2013}          & ROT                   & \xmark                   & \xmark                     & M                  \\ \hline

Orbe~\cite{Du:2013}                   & ROT                   & \xmark                   & \xmark                     & 1 ts      \\ \hline
Gentlerain~\cite{Du:2014}             & ROT                   & \xmark                   & \xmark                     & 1 ts        \\ \hline

POCC~\cite{Spirovska:2017}                   & ROT                   & \xmark                   & \xmark                     & M                  \\ \hline
COPS-SNOW~\cite{Lu:2016}              & ROT                   & $\checkmark$                   & \xmark                     & O$|$deps$|$            \\ \hline

OCCULT~\cite{Mehdi:2017}                 & Generic               & \xmark              & \xmark                     & O(M)               \\ \hline

Cure~\cite{Akkoorath:2016}                   & Generic               & \xmark                   & \xmark                     & M                  \\ \hline
Wren~\cite{Spirovska:2018}                   & Generic               & $\checkmark$                   & \xmark                     & 2 ts                \\ \hline

AV~\cite{Tomsic:2018}          & Generic               & $\checkmark$                   & \xmark                     & M                  \\ \hline
Xiang, Vaidya~\cite{Xiang:2018}       & \xmark              & \xmark                   & $\checkmark$                     & 1 ts               \\ \hline

Contrarian~\cite{Didona:2018}             & ROT                   & $\checkmark$                   & \xmark                     & M                  \\ \hline
C$^3$ ~\cite{Fouto:2018}                 & \xmark              & {$\checkmark$}                          & $\checkmark$                     & {M}                  \\ \hline

Saturn~\cite{Bravo:2017}                 & \xmark              & $\checkmark$                   & $\checkmark$                     & 1 ts                 \\ \hline
Karma~\cite{Mahmood:2018}                  & ROT                   & {$\checkmark$}                          & $\checkmark$                     & {O$|$deps$|$}                  \\ \hline

CausalSpartan~\cite{Roohitavaf:2017}          & \xmark              & $\checkmark$                   & \xmark                     & M                  \\ \hline
Bolt-on CC~\cite{Bailis:2013}        & \xmark              & $\checkmark$                   & \xmark                     & M                  \\ \hline
EunomiaKV~\cite{Gunawardhana:2017}          & \xmark              & $\checkmark$                   & \xmark                     & M                  \\ \hline\hline
{\bf \ts{}} (this work) & Generic               & $\checkmark$                   & $\checkmark$                     & 1  ts               \\ \hline
\end{tabular}
\caption{Taxonomy of the main CC systems. $M$ is the number of DCs. $ts$ stands for timestamp. For systems that do not support transactions, the {\em non-blocking read} property refers to single-item reads. \ts{} is the only system that supports partial replication with generic transactions, non-blocking parallel reads, and constant meta-data to track dependencies. }
\label{tab:rw}
\end{table}

Other relevant systems include  TARDiS~\cite{Crooks:2016b}, GSP~\cite{gsp},  SwiftCloud~\cite{Zawirski:2015}, Lazy Replication~\cite{Ladin:1992}, ISIS~\cite{Birman:1991} and Bayou~\cite{Terry:1995}. These systems do not support sharding, and hence neither partial replication. Many protocols have also been proposed to implement causally consistent distributed shared memories, e.g.,~\cite{ Baldoni:2006,Xiang:2018b,Hsu:2018}. These protocols do not support transactions and require more meta-data than \ts{}.

\ts{} is also related to systems that implement stronger consistency levels and support partial replication, such as Jessy~\cite{Ardekani:2013}, P-Store~\cite{Schiper:2014}, STR~\cite{Li:2018}, and Spanner~\cite{Corbett:2013}. On the one hand, these systems allow fewer anomalies than what is allowed by TCC~\cite{Akkoorath:2016} and provide fresher data to the clients. On the other hand, they incur higher synchronization costs to determine the outcome of transactions. \ts{} targets applications that can tolerate weaker consistency and some degree of data staleness, e.g., social networks, and offers them low latency, scalability and high storage capacity. 

\section{Conclusion}
\label{sec:conclusion}
We present \ts{}, the first system that implements TCC in a partially replicated system and achieves non-blocking read operations. \ts{} implements a novel  dependency tracking protocol, called UST, which requires only one timestamp to track dependencies. UST identifies a snapshot of the data that is available at {\em every} DC, thereby enabling non-blocking reads regardless of the DC in which the read takes place.

We evaluate \ts{} on a data platform replicated on up to 10 DCs. \ts{} scales well and achieves lower latency than the blocking alternative, while being able to handle larger datasets than existing solutions that assume full replication.

\section*{Acknowledgments}
This research has been supported by The Swiss National Science Foundation through Grant No. 166306 and by an EcoCloud postdoctoral research fellowship.

\bibliographystyle{IEEEtran}
\bibliography{biblio}

\end{document}